\newtheorem{lemma}{Lemma}
\newtheorem{theorem}{Theorem}
\newtheorem{proposition}{Proposition}
\newtheorem{remark}{Remark}
\begin{document}

\title[Quasi-Gaussian HJM model]
{Explosion in the quasi-Gaussian HJM model}

\author{Dan Pirjol}
\email{
dpirjol@gmail.com}

\author{Lingjiong Zhu}
\email{
ling@cims.nyu.edu}

\date{}

\keywords{HJM model, stochastic modeling, multidimensional diffusions, explosion}

\begin{abstract}
We study the explosion of the solutions of the SDE in the 
quasi-Gaussian HJM model with a CEV-type volatility. The quasi-Gaussian HJM 
models are a popular approach for modeling the dynamics of the yield curve. 
This is due to their low dimensional Markovian representation which simplifies 
their numerical implementation and simulation. We show rigorously that the 
short rate in these models explodes in finite time with positive 
probability, under certain assumptions for the model parameters, and that the
explosion occurs in finite time with probability one under some stronger 
assumptions. We discuss the implications of these results for the pricing
of the zero coupon bonds and Eurodollar futures under this model.
\end{abstract}

\maketitle

\section{Introduction}

The quasi-Gaussian Heath-Jarrow-Morton (HJM) models \cite{APbook,review,B,C,RS} are frequently used
in financial practice for modeling the dynamics of the yield curve \cite{APbook}.
They were introduced as a simpler alternative to the HJM model \cite{HJM}, which
describe the dynamics of the yield curve $f(t,T)$ as the stochastic
differential equation
\begin{equation}
df(t,T) = \sigma_f(t,T) dW(t) + 
\sigma_f(t,T) \left( \int_t^T \sigma_f(t,s) ds \right) dt \,,
\end{equation}
where $W(t)$ is a vector Brownian motion under the risk-neutral measure 
$\mathbb{Q}$, and $(\sigma_f(t,T))_{t\leq T}$ is a family of vector processes.
The dynamical variable in the HJM models is the forward rate $f(t,T)$ for
maturity $T$. 

The quasi-Gaussian HJM models assume a separable form for the volatility 
function $\sigma_f(t,T) = g(T) h_t$ where $g$ is a deterministic vector 
function and $(h_t)$ is a $k\times k$ matrix stochastic process. Such models 
admit a Markov 
representation of the dynamics of the yield curve in terms of $k+\frac12 k(k+1)$ 
state variables. This simplification aids considerably with the simulation of
these models, which can be performed using Monte Carlo or finite difference
methods \cite{CaZhu,PhD}.

We consider in this paper the one-factor quasi-Gaussian HJM model with
volatility specification $\sigma_f(t,T) = k(t,T) \sigma_r(r_t)$ where 
$k(t,T)=e^{-\beta(T-t)}$, and $\sigma(r_t)$ is the volatility of the short rate
$r_t = f(t,t)$. This model admits a two-state Markov representation.

It has been noted in \cite{Morton,HJM} that in HJM models with log-normal
volatility specification, that is for which $\sigma_f(t,T) = \sigma
f(t,T)$, the rates explode to infinity with probability one, and 
zero coupon bond prices approach zero. 
See also \cite{Wissel} for a general study of the conditions
for the existence of strong solutions to stochastic differential equations (SDEs) of HJM type.
A similar explosion appears in a two-dimensional model studied in \cite{Hogan}.
It is natural to ask if such explosions are present also in the
quasi-Gaussian HJM models. Models of this type with parametric
volatility are used in financial practice
for modeling the swaption volatility skew \cite{Chibane,quadratic,CaZhu}. 
Non-parametric forms have been also considered recently in the literature
\cite{GJQ,ChDup}.

We recall that singular behavior is also observed for certain derivatives
prices in short rate
log-normal interest rates models \cite{AP,APbook}. It was observed by 
Hogan and Weintraub \cite{HW} that Eurodollar futures prices are infinite 
in the Dothan and Black-Karasinski models. 
A milder singularity is also present in finite tenor log-normal models,
such as the Black-Derman-Toy model, manifested as a rapid increase of the 
Eurodollar futures convexity adjustment as the volatility increases above
a threshold value \cite{QF}.
This singularity can be avoided
by formulating the models by specifying the distributional properties
of rates with finite tenor \cite{SS}. 
This line of argument led to the formulation of the LIBOR Market Models
which are free of singularities \cite{APbook}.

In a recent work \cite{ORL}, we studied the small-noise limit of the
log-normal quasi-Gaussian model,
using a deterministic approximation, and showed rigorously that the short 
rate may explode to infinity in a finite time.
More precisely, it was shown in \cite{ORL} that for sufficiently small 
mean-reversion parameter $\beta$, the small-noise approximation for the short 
rate $r_t$ has an explosion in 
finite time, and an upper bound is given on the explosion time, which
is saturated in the flat forward rate limit.

In this paper, we extend these results in two directions:

(i) We consider a wider class of quasi-Gaussian HJM models with 
a constant elasticity of variance (CEV)-type volatility specification. 
This includes the log-normal model as a special case.
We also consider the case of the displaced log-normal model. 
These volatility specifications are widely used 
by practitioners \cite{Chibane,APbook}.

(ii) The Brownian noise is taken into account.
This requires the study of the explosion of the solutions of a 
two-dimensional stochastic differential equation. 
Mathematically, it is well known that for one-dimensional diffusion
processes there is the celebrated Feller criterion \cite{Feller,McKean} 
for explosion/non-explosion, see e.g. \cite{KaratzasShreve,RossPinsky} for 
overviews. 
This is a sufficient and necessary condition under which there is an explosion
in finite time. We note that the distribution of the explosion time has
been also studied recently \cite{KarRuf}.

For $d$-dimensional stochastic differential equations with $d>1$,
to the best of our knowledge, there is no sufficient and necessary condition
for explosion. 
Several sufficient conditions for explosions have been presented in 
the literature for multi-dimensional diffusions \cite{StroockVaradhan}.
The Khasminskii criterion for explosion is presented in \cite{McKean,RossPinsky}.
The method of the Lyapunov function was presented in \cite{CK,Khasminskii}.
This was extended to a non-linear Lyapunov method by \cite{exp}.
The application of these conditions is non-trivial, and checking that 
the conditions required hold is sometimes very challenging.

We rely on the sufficient conditions for explosion with positive probability 
and explosion with probability one given in \cite{CK,Khasminskii}. 
The main tool is the construction of some delicate Lyapunov functions that 
satisfy certain non-trivial conditions \cite{CK,Khasminskii}. 

We show rigorously that under certain conditions, 
in the CEV-type model with exponent in a certain range $(\frac12,1]$,
including the log-normal case, the short rate explodes in finite time with 
positive probability.
We also show rigorously that under additional
assumptions, the explosion occurs with probability one. 

The explosion phenomenon that we prove rigorously has implications 
for the practical use of the model for pricing and simulation. Such
explosions are observed in practical applications of the model, and we
illustrate them on a numerical example in the log-normal quasi-Gaussian
HJM model in Section~\ref{sec:example}. This phenomenon implies 
the collapse of the zero coupon bond 
prices, similar to that occurring in  the log-normal HJM model \cite{HJM},
and an explosion of interest rate derivatives linked to the LIBOR rate, 
in particular the Eurodollar futures prices.
This introduces a limitation in the applicability of the 
model for pricing these products to maturities smaller than the explosion time. 

The paper is organized as follows. In Section~\ref{sec:model},
we introduce the model, and discuss its use in the literature. 
In Section~\ref{sec:main}, we present rigorous
results giving sufficient conditions for explosion in finite time with
positive probability in the quasi-Gaussian HJM model
with CEV-like volatility specification.
Furthermore, under stronger assumptions,
we can show that the explosion occurs in some finite time with probability one.
In Section~\ref{sec:imply}, we discussion the implications of our results
to the pricing of the zero coupon bond and the Eurodollar futures.
Finally, the proofs are collected in the Appendix.

\section{One factor quasi-Gaussian HJM model}\label{sec:model}

We will consider in this paper a class of one-factor quasi-Gaussian 
HJM models, defined by the 
volatility specification
\begin{equation}\label{specmodel}
\sigma_f(t,T) = \sigma_r( r_t ) e^{-\beta (T-t)} \,.
\end{equation}
Several parametric choices for the short rate volatility function
$\sigma_r(x)$ have been considered in the literature, including:

\begin{itemize}
\item[(i)] Log-normal model \cite{CaZhu}: $\sigma_r(x) = \sigma x$;

\item[(ii)] Displaced log-normal model, also known in the literature as the
linear Cheyette model \cite{Chibane,APbook}:
$\sigma_r(x) = \sigma (x + a)\,$;

\item[(iii)] CEV-type model \cite{CaZhu}:
$\sigma_r(x) = \sigma x^\gamma$, where $\gamma\in(0,1]\,$.
\end{itemize}

The simulation of the model with the volatility specification \eqref{specmodel} 
can be reduced to simulating the stochastic
differential equation for the two variables $\{ x_t,y_t\}_{t\geq 0}$ 
\cite{RS,APbook}
\begin{eqnarray}\label{xySDE}
&& dx_t = (y_t - \beta x_t) dt + \sigma_r(\lambda(t) + x_t) dW_t  ,\\
&& dy_t = (\sigma_r^2(\lambda(t) + x_t) - 2\beta y_t) dt ,\nonumber
\end{eqnarray}
with initial condition $x_0 = y_0 = 0$. Here $\lambda(t) = f(0,t)$ is
the forward short rate, giving the initial yield curve. 
The price of the zero coupon bond with maturity $T$ is
\begin{equation}\label{ZCB}
P(t,T) = \frac{P(0,T)}{P(0,t)}
\exp\left(- G(t,T) x_t - \frac12 G^2(t,T) y_t \right)\,,
\end{equation}
with $G(t,T) = \frac{1}{\beta}(1-e^{-\beta (T-t)})$ a non-negative 
deterministic function \cite{APbook}. 
The short rate is $r_t := f(t,t) = \lambda(t) + x_t$.

Under the CEV-type volatility
$\sigma_{r}(x)=\sigma x^\gamma$,
the equations (\ref{xySDE}) can be expressed in terms of the short rate as
\begin{eqnarray}\label{rySDE}
&& dr_t = \sigma r_{t}^\gamma dW_t + 
          (y_t - \beta r_t + \beta \lambda(t) + \lambda'(t)) dt \,,\\
&& dy_t = (\sigma^2 r_t^{2\gamma} - 2\beta y_t)  dt \,,\nonumber
\end{eqnarray}
with the initial condition $r_0 = \lambda_0 := \lambda(0) > 0$ and $y_0 = 0$.

One potential complication with 
the usual CEV volatility specification $\sigma_{r}(x)=\sigma x^{\gamma}$ is 
related to the non-uniqueness of the solution of the SDE (\ref{rySDE}) for $0 < \gamma < 1$ \cite{Feller}.  
Recall that for 
the usual CEV model \cite{CEV}, given by the SDE with $0<\gamma \leq 1$,
\begin{equation}
dx_t = \sigma x_t^\gamma dW_t + \mu x_t dt\,,
\end{equation}
the origin $x=0$ is a regular boundary for $0<\gamma<\frac12$, and
an exit boundary for $\frac12 \leq \gamma < 1$. For the geometric Brownian 
motion case $\gamma=1$,
the point zero is a natural boundary. For $0<\gamma<\frac12$, the solution
of the SDE is  not unique, and an additional boundary condition must be 
imposed at $x=0$ in order to ensure uniqueness. 

In order to avoid singular behavior near the origin $r=0$,
practitioners use various modifications of the
quasi-Gaussian model with CEV volatility specification near the $r=0$ point,
see e.g. Section 4.3 of \cite{CheyettePhD}. This work describes three possible
modifications: (a) $\sigma_r(x) \to \sigma |x|^\gamma$; (b)
$\sigma_r(x) \to 0$ for $|x|\leq \varepsilon$; (c)
$\sigma_r(x) \to \sigma \varepsilon^\gamma$ for $|x|\leq \varepsilon$
with $\varepsilon>0$ a small cutoff.

In this paper, following
\cite{AA}, we consider the modified quasi-Gaussian HJM model with a CEV-type 
volatility specification
\begin{equation}\label{epsCEV}
\sigma_r(x) = \sigma x \min(x^{\gamma-1} , \varepsilon^{\gamma-1})\,,
\end{equation}
with $\varepsilon>0$ small and $0<\gamma\leq 1$. 
We call this the $\varepsilon-$CEV quasi-Gaussian HJM model. 

Note that as $\varepsilon=0$, this reduces to the usual
CEV volatility specification $\sigma_{r}(x)=\sigma x^{\gamma}$, see e.g. \cite{CaZhu}.
The modification $\varepsilon>0$ impacts only the
region of small $0 < r_{t} < \varepsilon$, where the process is identical with
the log-normal model with the volatility $\sigma\varepsilon^{\gamma-1}$, and leaves unchanged the behavior of 
the process for large $r_{t}$, which is relevant for the study of the explosions of $r_{t}$.
The modification is only required for $0< \gamma < 1$.
When $\gamma=1$, the equation \eqref{epsCEV} reduces to $\sigma_{r}(x)=\sigma x$, 
which coincides with the log-normal model.

With the volatility specification \eqref{epsCEV}, we will study the 2-dimensional SDE with $\varepsilon>0$
\begin{align}\label{rSDE}
&dr_{t}=(y_{t}-\beta r_{t}+\beta\lambda(t)+\lambda'(t))dt
+\sigma r_t \min(r_t^{\gamma-1},\varepsilon^{\gamma-1})dW_{t},
\\
\label{ySDE}
&dy_{t}=(\sigma^{2} r_{t}^{2} \min(r_t^{2\gamma-2}, \varepsilon^{2\gamma-2})
-2\beta y_{t})dt,
\end{align}
with the initial condition $r_{0}=\lambda(0)> \varepsilon$ and $y_{0}=0$.

In the special case $\gamma=1$, \eqref{rSDE},\eqref{ySDE} reduces to 
the log-normal model
\begin{eqnarray}
&& dr_t = \sigma r_{t}dW_t + 
          (y_t - \beta r_t + \beta \lambda(t) + \lambda'(t)) dt \,,\label{rLN}\\
&& dy_t = (\sigma^2 r_t^{2} - 2\beta y_t)  dt \,,\label{yLN}
\end{eqnarray}
with the initial condition $r_0 = \lambda_0 := \lambda(0) > 0$ and $y_0 = 0$.

Assume that $\lambda'(t) + \beta \lambda(t) \geq 0$ and $r_0>0$. 
Then the solutions of \eqref{rLN}, \eqref{yLN} are positive with probability one
\begin{equation}\label{rpositive}
\mathbb{P}(r_t > 0)=1,\qquad\text{for all $t\geq 0$}.
\end{equation} 
The result follows by noting that 
\begin{equation}\label{ypositive}
y_t = \sigma^2 \int_0^t r_s^2 e^{-2\beta(t-s)}
ds >0,
\end{equation}
almost surely for every $t>0$, and then follows by an application of
the comparison theorem (Theorem 1.1 in \cite{Yamada} and Theorem 5.2.18 in 
\cite{KaratzasShreve}). See also the Appendix D in \cite{NAK} for
a proof of this result. 

This implies that the origin $r=0$ is a natural boundary for this diffusion. 
For the time-homogeneous case $\lambda(t)=\lambda_0$ we use a similar argument
to prove the same result for the $\varepsilon-$CEV model 
with general $\gamma \in (\frac12,1]$, see the argument around Eq.~(\ref{43}).

The SDE for the displaced log-normal model
$\sigma_r(x)=\sigma (x+a)$ reduces to that for the log-normal case by the 
substitutions $r_t + a \to r_t, \lambda(t) + a \to \lambda(t)$. Expressed
in terms of $r_t,y_t$, this is
\begin{eqnarray}
&& dr_t = \sigma (r_t + a) dW_t + (y_t - \beta r_t + \beta \lambda(t) + \lambda'(t))dt\,, \\
&& dy_t = (\sigma^2 (r_t + a)^2 - 2\beta y_t) dt\,,
\end{eqnarray}
with initial conditions $r_0=\lambda(0), y_0=0$. 
Defining $\tilde r_t = r_t + a$ the shifted short rate, we have
\begin{eqnarray}
&& d\tilde r_t = \sigma \tilde r_t dW_t + 
 (y_t - \beta \tilde r_t + \beta (\lambda(t) +a) + \lambda'(t))dt\,, \\
&& dy_t = (\sigma^2 \tilde r_t^2 - 2\beta y_t) dt\,,
\end{eqnarray}
started at $\tilde r_t=\lambda(0)+a, y_0=0$. Redefining $\lambda(t)+a\to 
\lambda(t)$, the shift parameter $a$ disappears, and the resulting SDE is 
identical to that for the log-normal $\gamma=1$ model. 

Under this model negative values for $r_t$ can also be accommodated,
with a floor on the short rate $r_t > -a$.
We will assume that $r_0+ a > 0$, and then $r_t > - a$ for any $t>0$.
All the results for $\gamma=1$ apply also to the displaced log-normal model
with minimal substitutions.

In \cite{ORL}, we studied the small-noise deterministic limit of the SDE
\eqref{rSDE},\eqref{ySDE} in the log-normal case $\gamma=1$
\begin{eqnarray}\label{ODE}
&& r'(t) = y(t) - \beta r(t) + \beta \lambda(t) + \lambda'(t),\\
&& y'(t) = \sigma^2 (r(t))^2 - 2\beta y(t) \,,\nonumber
\end{eqnarray}
with $r(0)=\lambda(0)=\lambda_{0}>0$ and $y(0)=0$.
In the small-noise limit, it is proved rigorously in \cite{ORL} that
for sufficiently large $\beta$ or sufficiently small $\sigma$, 
the short rate $r(t)$ is uniformly bounded, and hence there is no explosion.
When $\beta=0$, the short rate explodes in a finite time, and an upper bound 
is given for the explosion time (Proposition 4 in \cite{ORL}). 
Under the further assumption that $\lambda(t)\equiv\lambda_{0}$,
the upper bound for the explosion time given in Proposition 4 of \cite{ORL} 
is sharp. The $\beta>0$ case is also considered, under the simpler setting
of a time homogeneous model $\lambda(t)\equiv\lambda_{0}$.
For this case it is shown in \cite{ORL} that when 
$\beta<\beta_{C}:=\sigma\sqrt{2\lambda_{0}}$, 
the explosion occurs at a finite time and
when $\beta\geq\beta_{C}$, we have
$\lim_{t\rightarrow\infty}r(t)=\frac{\beta^{2}}{\sigma^{2}}
(1-\sqrt{1-\frac{2\sigma^{2}\lambda_{0}}{\beta^{2}}})$,
and there is no explosion.

In this paper we would like to study directly the original stochastic system 
\eqref{rSDE},\eqref{ySDE} in the presence of random noise. 
We will show rigorously that the solutions of the stochastic system 
\eqref{rSDE}, \eqref{ySDE} may explode with 
non-zero probability for $\gamma \in (\frac12,1]$, 
and under some additional assumptions with probability one.

\subsection{Numerical example for $\gamma=1$}
\label{sec:example}

Such explosions are indeed observed in numerical simulations of the
stochastic system \eqref{rSDE},\eqref{ySDE}.
We illustrate this phenomenon for the log-normal model $\gamma=1$
in Figure~\ref{fig:beta}, 
which shows sample paths for $\{r_t \}_{t\geq 0}$ for
several choices of the model parameters $\sigma$, $\lambda(:=\lambda_0),\beta$.
These results were obtained by numerical simulation of the stochastic
differential equations \eqref{rSDE},\eqref{ySDE} by Euler discretization with time step
$\tau=0.01$.

In Figure~\ref{fig:beta}, we fix $\sigma=0.2,\lambda=0.1$ and 
consider two values of $\beta$. The left plot shows sample paths for $r_t$ with
$\beta=0$. The paths explode at various times, which is expected in the
presence of the Brownian noise. In the small-noise limit studied in
\cite{ORL}, the explosion time is deterministic. The corresponding explosion
time can be found in closed form for $\lambda(t) = \lambda_0$, and is
given in Proposition 4 of \cite{ORL}. The prediction is shown in 
Figure~\ref{fig:beta} (left) as the red vertical line. 

The right plot in Figure~\ref{fig:beta} shows sample paths for $\beta=0.05$.
There is still explosion, but the explosion tends to occur at longer maturities.
This is in qualitative agreement with the behavior expected in the 
small-noise limit \cite{ORL} where it
was shown that increasing $\beta$ delays the explosion time, and suppresses
it completely for $\beta \geq \beta_C=\sigma\sqrt{2\lambda_0}$. 
For the parameters considered
in Figure~\ref{fig:beta} the small-noise critical value is $\beta_C=0.089$.
In the stochastic case, taking $\beta=0.1$ (not shown) the explosion is 
further delayed to longer maturities, or completely suppressed.

\begin{figure}
\centering
\begin{subfigure}
  \centering
  \includegraphics[width=2.8in]{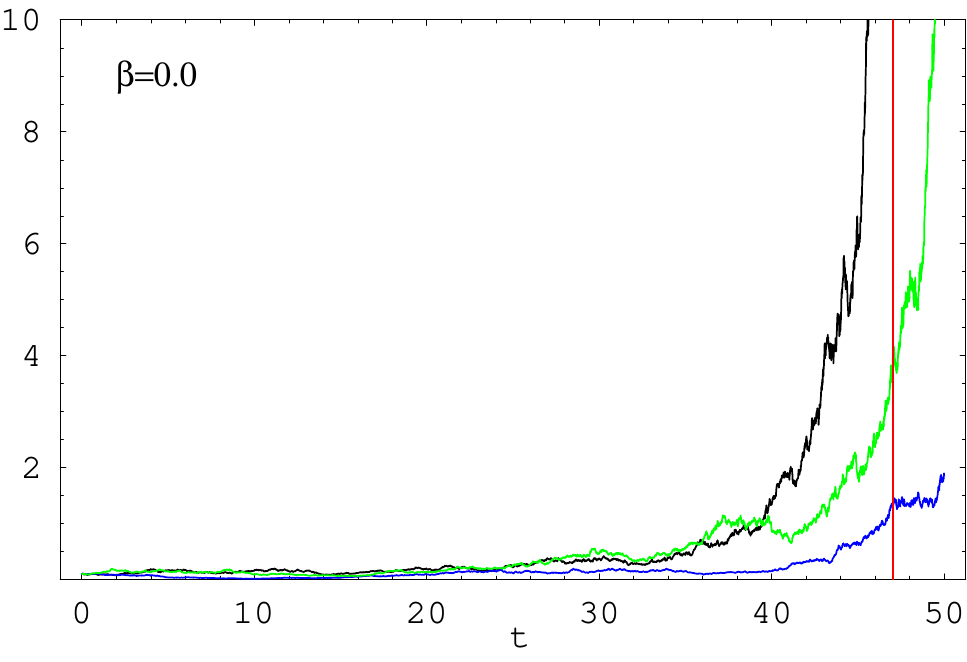}
\end{subfigure}
\begin{subfigure}
  \centering
  \includegraphics[width=2.8in]{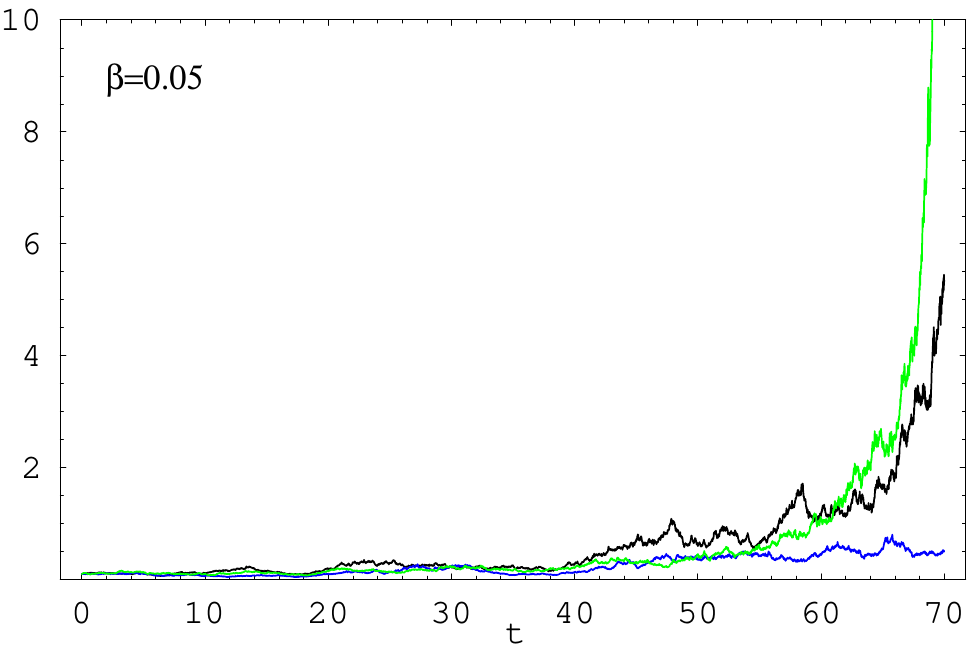}
\end{subfigure}
\caption{Sample paths for $\{ r_t\}_{t\geq 0}$ for 
$\sigma=0.2$ and $\lambda=0.1$ in the log-normal quasi-Gaussian HJM model.
Left plot: sample paths with $\beta=0$. The red vertical
line is at $t_{\rm exp}=47.03$y, which is the small-noise explosion
time following from Proposition 4 in \cite{ORL}.
Right plot: $\beta=0.05$.}
\label{fig:beta}
\end{figure}

\section{Explosion of the CEV-type quasi-Gaussian HJM model}
\label{sec:main}

Assume that the forward rate $\lambda(t)$ satisfies the inequality,
\begin{equation}\label{assump1}
\lambda'(t) + \beta \lambda(t) \geq \beta \lambda(0)\,.
\end{equation}
By a comparison argument, the solutions of \eqref{rSDE},\eqref{ySDE} are bounded from below
by the solutions of the time-homogeneous SDE obtained by replacing
$\lambda(t) \to \lambda_{0}=\lambda(0)$. Thus, for the purpose of studying
the explosions of the solutions of the SDE \eqref{rSDE},\eqref{ySDE} it is sufficient
to study the corresponding time-homogeneous SDE with constant $\lambda(t)=
\lambda_0$
\begin{align}\label{rSDE2}
&dr_{t}=(y_{t}-\beta r_{t}+\beta\lambda_0)dt
+\sigma r_t \min(r_t^{\gamma-1},\varepsilon^{\gamma-1})dW_{t},
\\
\label{ySDE2}
&dy_{t}=(\sigma^{2} r_{t}^{2} \min(r_t^{2\gamma-2}, \varepsilon^{2\gamma-2})
-2\beta y_{t})dt,
\end{align}
with the initial condition $r_{0}=\lambda_0> \varepsilon$ and $y_{0}=0$.

The coefficients of this SDE satisfy a local Lipschitz condition. 
For $0< \gamma \leq \frac12$ they also satisfy a sublinear growth
condition and global Lipschitz condition.
Thus we can apply the standard result, see for example Theorem 5.2.9 in
\cite{KaratzasShreve}, to conclude that the SDE has a unique
strong solution, which is furthermore square integrable and thus non-explosive. 
On the other hand we show that for $\frac12 < \gamma \leq 1$ the solution can
explode to infinity in finite time with non-zero probability.


The infinitesimal generator of this diffusion is
\begin{eqnarray}\label{Leps}
\mathcal{L}_\varepsilon V(r,y) &=&
(\sigma^2 r^2 \min(r^{2\gamma-2}, \varepsilon^{2\gamma-2}) - 2\beta y)\partial_y V\\
& & + (y- \beta r + \beta r_0) \partial_r V 
+ \frac12\sigma^2 r^2 \min(  r^{2\gamma-2}, \varepsilon^{2\gamma-2})
\partial_r^2 V\,. 
\nonumber
\end{eqnarray}

We would like to study the explosion time of this diffusion, defined as
\begin{equation}
\tau:=\sup\{t>0:y_{t}<\infty,r_{t}<\infty\}.
\end{equation}

\begin{figure}[t!]
\begin{center}
\includegraphics[height=65mm]{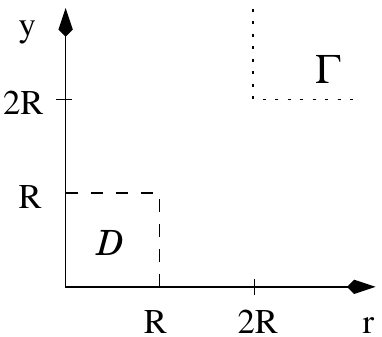}
\end{center}
\caption{
Regions $\mathcal{D}$ and $\Gamma$ for the application of 
Proposition~\ref{Thm1}.}
\label{Fig:regD}
\end{figure}

We present a few preliminary results which will be used in our proof.
The following theorem was proved in \cite{CK}, see Theorem 1.

\begin{proposition}[Theorem 1, \cite{CK}]\label{Thm1}
Let $\mathcal{D}\subset\mathbb{R}^{d}$ be a bounded open set with regular 
boundary $\partial\mathcal{D}$ and let $\mathcal{D}^{c}$ be the complement 
of $\mathcal{D}$. Consider the $d$-dimensional diffusion 
$dX_t = \sigma(X_t) dW_t + b(X_t)dt$ where the coefficients $\sigma(\cdot),
b(\cdot)$ are Lipschitz continuous on any compact subset of 
$\mathbb{R}^{d}$ for any $t\geq t_{0}$.
Moreover, there exists a positive function $V(t,x)\in C^{1,2}([t_{0},\infty)\times\mathcal{D}^{c})$
and positive constants $K_{1}$, $K_{2}<K_{3}$ and $C$ such that

(A.1) $\sup_{t\geq t_{0},x\in\mathcal{D}^{c}}V(t,x)=K_{1}<\infty$.

(A.2) $\sup_{t\geq t_{0},x\in\partial\mathcal{D}}V(t,x)=K_{2}<\inf_{t\geq t_{0},x\in\Gamma}V(t,x)=K_{3}$,
for some set $\Gamma\subset\mathcal{D}^{c}$.

(A.3) $\mathcal{L}V(t,x)\geq CV(t,x)$ for every $t\geq t_{0}$, 
$x\in\mathcal{D}^{c}$, 
where $\mathcal{L}$ is the infinitesimal generator of $X_t$.

Then, the explosion eventually occurs with positive probability if the process
starts at a future time $t_{1}\geq t_{0}$ at a point $x\in\Gamma$.
\end{proposition}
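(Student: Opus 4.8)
The plan is to argue by contradiction, using the exponentially reweighted Lyapunov process $Y_t:=e^{-C(t-t_1)}V(t,X_t)$. Suppose the diffusion started from $x\in\Gamma$ at time $t_1$ were non-explosive with probability one. Let $\theta:=\inf\{t\ge t_1:X_t\in\overline{\mathcal D}\}$ be its first return time to $\mathcal D$. Condition (A.2) forces $\Gamma\cap\partial\mathcal D=\emptyset$, so $x$ lies in the open set $(\overline{\mathcal D})^{c}$; hence $\theta>t_1$, the path stays in $\mathcal D^{c}$ on $[t_1,\theta)$, and, using regularity of $\partial\mathcal D$, $X_\theta\in\partial\mathcal D$ on $\{\theta<\infty\}$.

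On $[t_1,\theta]$ the function $V$ is $C^{1,2}$, so Itô's formula (with $\mathcal L$ the space--time generator, i.e.\ $\partial_t$ plus the spatial generator of $X_t$) gives
\[
dY_t=e^{-C(t-t_1)}\big(\mathcal L V-CV\big)(t,X_t)\,dt+e^{-C(t-t_1)}\,\nabla V(t,X_t)\cdot\sigma(X_t)\,dW_t .
\]
By (A.3) the drift is nonnegative, so the stopped process $Y_{t\wedge\theta}$ is a continuous local submartingale, and by (A.1) (with $C>0$ and $t\ge t_1$) it obeys $0<Y_{t\wedge\theta}\le K_1$. Hence $K_1-Y_{t\wedge\theta}\ge0$ is a nonnegative local supermartingale, therefore a genuine supermartingale; it converges a.s.\ as $t\to\infty$, so $Y_{t\wedge\theta}\to Y_\infty\in[0,K_1]$ a.s., and since $\mathbb E[Y_{t\wedge\theta}]$ is nondecreasing and bounded, bounded convergence yields $\mathbb E[Y_\infty]=\lim_t\mathbb E[Y_{t\wedge\theta}]\ge Y_{t_1}=V(t_1,x)\ge K_3$, the last inequality because $x\in\Gamma$.

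It remains to identify $Y_\infty$. On $\{\theta<\infty\}$ we have $Y_\infty=Y_\theta=e^{-C(\theta-t_1)}V(\theta,X_\theta)\le K_2$ by (A.2), since $X_\theta\in\partial\mathcal D$. On $\{\theta=\infty\}$ the path stays in $\mathcal D^{c}$ for all time, and — the only place the standing non-explosion hypothesis enters — $V(t,X_t)$ stays defined and bounded by $K_1$, so $0\le Y_t\le K_1e^{-C(t-t_1)}\to0$ and $Y_\infty=0$. In either case $Y_\infty\le K_2$ a.s., so $\mathbb E[Y_\infty]\le K_2<K_3$, contradicting the previous bound. Therefore the diffusion started from $x\in\Gamma$ at time $t_1$ explodes in finite time with positive probability.

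I expect the points needing the most care to be: (i) that the exponential weight must carry exactly the constant $C$ of (A.3) — this is what makes the drift of $Y$ nonnegative and what forces $Y_t\to0$ on $\{\theta=\infty\}$, so (A.1) and (A.3) are genuinely used in tandem; (ii) the local-to-true supermartingale upgrade, which rests solely on the uniform bound $K_1$, together with the routine check that the locally Lipschitz coefficients make $X$ a well-defined continuous semimartingale up to explosion so that Itô applies on $\mathcal D^{c}$; and (iii) the elementary but necessary ordering $K_2<K_3\le K_1$, the second inequality because $\Gamma\subset\mathcal D^{c}$. No quantitative information on the explosion time is needed for this positive-probability statement.
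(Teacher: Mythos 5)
This proposition is quoted from Chow--Khasminskii \cite{CK}; the paper does not reprove it, so there is no in-paper argument to compare against. Your proof is correct and is essentially the standard Lyapunov-function argument behind Theorem 1 of \cite{CK} (and Theorem 3.6 of \cite{Khasminskii}): the contradiction via the bounded local sub/supermartingale $e^{-C(t-t_1)}V(t,X_{t\wedge\theta})$, the identification of its limit as at most $K_2$ on $\{\theta<\infty\}$ and $0$ on $\{\theta=\infty\}$ (the only place the assumed non-explosion and the positivity of $C$ enter), and the resulting clash with $K_3$, are exactly the intended mechanism; the supporting technicalities you flag (nonnegative local supermartingales are true supermartingales, $\Gamma\cap\partial\mathcal{D}=\emptyset$ so $\theta>t_1$) are handled correctly.
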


\begin{proposition}[Theorem 2 in \cite{CK}]\label{Thm2}
Assume the conditions in Proposition~\ref{Thm1} are satisfied. 
Then, we have the almost sure explosion provided the additional assumptions hold:

(A.4) $\inf_{t\geq t_{0},x\in\mathcal{D}^{c}}V(t,x)=K_{0}>0$;

(A.5) For any $t\geq t_{0}$, $x\in\partial\mathcal{D}$,
$\mathbb{P}^{t,x}(\tau_{\Gamma}<\infty)=1$,
where $\tau_{\Gamma}$ is the first hitting time of the set $\Gamma$.
\end{proposition}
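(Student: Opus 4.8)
\emph{Strategy.} I would prove Proposition~\ref{Thm2} by upgrading the ``positive probability'' conclusion of Proposition~\ref{Thm1} to ``probability one'' via a renewal argument: assumption (A.5) brings the process back to $\Gamma$ after each failed attempt to explode, and assumption (A.4) rules out the process wandering in $\mathcal{D}^{c}$ forever without ever exploding or returning to $\partial\mathcal{D}$. This needs two quantitative refinements of the Lyapunov estimate underlying Proposition~\ref{Thm1}, after which the argument closes by a geometric-series estimate.

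\emph{Step 1 (a lower bound on the explosion probability from $\Gamma$, uniform in the starting time).} Run the process from $X_{s}=x\in\Gamma$ with $s\ge t_{0}$; let $\tau_{\partial\mathcal{D}}$ be the first time after $s$ that the path meets $\partial\mathcal{D}$, and set $\theta:=\tau\wedge\tau_{\partial\mathcal{D}}$, so that $X_{t}\in\mathcal{D}^{c}$ and $V(t,X_{t})$ is defined on $[s,\theta)$. By (A.3) the shifted functional $M_{t}:=e^{-C(t-s)}V(t,X_{t})$ is a local submartingale on $[s,\theta)$; localizing at the first exit time $\tau_{R}$ from the ball of radius $R$ and using $0<V\le K_{1}$ from (A.1) and (A.4), one obtains $\mathbb{E}^{s,x}[M_{t\wedge\theta\wedge\tau_{R}}]\ge M_{s}=V(s,x)\ge K_{3}$. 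Letting $R\to\infty$ (so $\tau_{R}\uparrow\tau$) and then $t\to\infty$, and splitting according to whether $\theta$ equals $\tau_{\partial\mathcal{D}}$, $\tau$, or $\infty$: on $\{\theta=\tau_{\partial\mathcal{D}}\}$ one has $M_{\theta}\le e^{-C(\tau_{\partial\mathcal{D}}-s)}K_{2}\le K_{2}$; on $\{\theta=\tau\}$ one has $M_{t}\le e^{-C(t-s)}K_{1}\le K_{1}$; on $\{\theta=\infty\}$ the prefactor $e^{-C(t-s)}\to 0$ forces the limit to vanish. Hence $K_{3}\le K_{1}\,\mathbb{P}^{s,x}(\theta=\tau)+K_{2}$, that is,
\[
\mathbb{P}^{s,x}\big(\text{explosion before hitting }\partial\mathcal{D}\big)\ \ge\ \delta:=\frac{K_{3}-K_{2}}{K_{1}}\ >\ 0,
\]
with $\delta$ independent of $s\ge t_{0}$ and of $x\in\Gamma$ --- it is the $s$-shift in the exponential, rather than a bare $e^{-Ct}$, that keeps the bound from degenerating as $s\to\infty$.

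\emph{Step 2 (no escape) and conclusion.} From (A.3) and (A.4), $\mathcal{L}V\ge CV\ge CK_{0}>0$ on $[t_{0},\infty)\times\mathcal{D}^{c}$, so $V(t,X_{t})-CK_{0}(t-s)$ is a local submartingale on $[s,\theta)$; the same localization together with $V\le K_{1}$ gives $\mathbb{E}^{s,x}[\theta-s]\le K_{1}/(CK_{0})<\infty$ for every starting point in $\mathcal{D}^{c}$ and every $s\ge t_{0}$, hence $\theta<\infty$ almost surely: from any point of $\mathcal{D}^{c}$ the process a.s.\ either explodes or hits $\partial\mathcal{D}$ in finite time. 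Now fix $t_{1}\ge t_{0}$, $x\in\Gamma$, and work on $\{\tau=\infty\}$: by the previous sentence the path hits $\partial\mathcal{D}$ in finite time; by (A.5) and the strong Markov property it then returns to $\Gamma$ at a finite time; iterating yields an increasing sequence of finite times $t_{1}<t_{2}<\cdots$ with $X_{t_{k}}\in\Gamma$ and no explosion in the $k$-th segment (from $t_{k}$ to $t_{k+1}$). But by Step~1 and the strong Markov property at $t_{k}$, conditionally on $\mathcal{F}_{t_{k}}$ the $k$-th segment contains an explosion with probability at least $\delta$, so the probability that the first $n$ segments are all explosion-free is at most $(1-\delta)^{n}$; since $\{\tau=\infty\}$ lies in that event for every $n$, $\mathbb{P}^{t_{1},x}(\tau=\infty)\le\lim_{n\to\infty}(1-\delta)^{n}=0$, which is the claim.

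\emph{Main obstacle.} The substance is in Steps~1 and~2, i.e.\ in pushing the Lyapunov estimates through rigorously despite the explosion: the submartingales are only local, $\mathcal{D}^{c}$ is unbounded, and one must verify that stopping at $\tau_{R}$ genuinely approximates stopping at $\tau$ and that the explosive paths contribute correctly to the expectations --- this is precisely where the regularity of $\partial\mathcal{D}$, the two-sided bound $K_{0}\le V\le K_{1}$ (so $\mathcal{L}V$ is bounded below by a positive constant), and the time-shift in the exponential all enter. Once (A.5) is available, the renewal step itself is a routine strong-Markov/geometric-series argument.
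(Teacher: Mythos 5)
This proposition is not proved in the paper at all: it is imported verbatim as Theorem~2 of \cite{CK}, so there is no internal proof to compare against. Your reconstruction is correct and follows what is essentially the Chow--Khasminskii argument itself: the discounted Lyapunov functional $e^{-C(t-s)}V(t,X_t)$ gives a lower bound $\delta=(K_3-K_2)/K_1$ on the probability of exploding before reaching $\partial\mathcal{D}$, uniformly over starting times $s\ge t_0$ and starting points in $\Gamma$ (this is where the $s$-shift matters, as you note); the new hypothesis (A.4) upgrades $\mathcal{L}V\ge CV$ to $\mathcal{L}V\ge CK_0>0$, which bounds $\mathbb{E}[\theta-s]$ and rules out the process lingering in $\mathcal{D}^c$ forever; and (A.5) plus the strong Markov property closes the renewal loop, giving $\mathbb{P}(\tau=\infty)\le(1-\delta)^n$ for every $n$. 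Two minor points you could make explicit if writing this out in full: the $t_k$ are genuine stopping times and the segment events are measurable with respect to the corresponding $\sigma$-fields, so the conditional bound $\mathbb{P}(A_k^c\mid\mathcal{F}_{t_k})\le 1-\delta$ iterates even if the $t_k$ were to accumulate; and on the event $\{\tau=\infty\}$ the hitting time $\tau_\Gamma$ in (A.5) is automatically the return time of the unexploded path, so no circularity arises in interpreting (A.5). Neither affects the validity of the argument.
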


Theorem 1 in \cite{CK} is a generalization of Theorem 3.6 in \cite{Khasminskii}.
Intuitively, it relates the explosion of the solution of a stochastic
differential equation to the behavior of an appropriately defined Lyapunov
function for large values in the space domain. 

Unlike the case of one-dimensional diffusion processes,  where a sufficient 
and necessary condition for explosion is given by Feller's criterion \cite{McKean}, 
for multidimensional diffusions, there are many different theoretical results 
giving sufficient conditions for explosions \cite{McKean,CK,Khasminskii}. 
For our purpose, Theorem 1 in \cite{CK} suffices. The
strategy of the proof will be to construct
an appropriate Lyapunov function, and show that for the two-dimensional 
SDE model \eqref{rSDE2}, \eqref{ySDE2}, explosion occurs with positive 
probability. The main result of this paper follows.


\begin{theorem}\label{mainThm}
Assume $\lambda(t)\equiv r_{0}>\varepsilon > 0$ and $\beta \geq 0$.

(a) For $\gamma \in (0,\frac12]$ the solution of the 
SDE \eqref{rSDE2},\eqref{ySDE2} $\{r_{t},y_{t}\}_{t\geq 0}$ is non-explosive.

(b) For $\gamma \in (\frac12,1]$ the solution of the SDE 
\eqref{rSDE2},\eqref{ySDE2} $\{r_{t},y_{t}\}_{t\geq 0}$  explodes with non-zero
probability $P(\tau<\infty)>0$ provided that any one of the following conditions
is satisfied for at least one set of $(\delta_1,\delta_2)$, where
$\delta_{1},\delta_{2} >0$ are positive constants
satisfying $(1+\delta_1)(1+\delta_2) = 2\gamma$.

(i) $\sup_{R\geq \varepsilon}F(R;\beta,\sigma) >0$ where the function 
$F(R;\beta,\sigma)$ is defined by
\begin{equation}
F(R;\beta,\sigma):= R^{2\gamma} -
\left(2\beta+\frac12\sigma^2 \delta_2(\delta_2+1)\right)
\left(
\frac{1}{\delta_1 \sigma^2}
(1+R)^{\delta_1+1} +
\frac{1}{\delta_2}  
R^{2\gamma-1} (1+R)^{\delta_2+1} \right)\,.
\end{equation}

(ii) $\sup_{R\geq \varepsilon}(G(R) - (2\beta+\frac12 \sigma^2\delta_2(\delta_2+1)))
\geq 0$ where the function $G(R)$ is defined by
\begin{equation}\label{Gdef}
G(R):= \frac{\delta_2 R}{(1+R)^{\delta_2+1}}\,.
\end{equation}
\end{theorem}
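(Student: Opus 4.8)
The plan is to treat parts (a) and (b) separately, and within (b) to verify the hypotheses (A.1)--(A.3) of Proposition~\ref{Thm1} for a carefully chosen Lyapunov function. For part (a), when $\gamma\in(0,\tfrac12]$ the coefficient $\sigma r\min(r^{\gamma-1},\varepsilon^{\gamma-1})=\sigma\min(r^\gamma,\varepsilon^{\gamma-1}r)$ grows at most linearly in $r$ (indeed it is bounded by $\sigma\max(1,\varepsilon^{\gamma-1})(1+r)$, and the $y$-coefficient $\sigma^2\min(r^{2\gamma},\varepsilon^{2\gamma-2}r^2)$ grows at most quadratically, i.e. linearly in $(r,y)$), while the drift terms are affine in $(r,y)$. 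Hence the SDE \eqref{rSDE2},\eqref{ySDE2} satisfies a global linear-growth bound and a local Lipschitz condition, so by the standard existence/uniqueness theorem (Theorem 5.2.9 in \cite{KaratzasShreve}) the solution is unique, square-integrable and in particular non-explosive; this is exactly the remark already made in the text, so part (a) requires only that observation.

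The substance is part (b). Fix $\gamma\in(\tfrac12,1]$ and $\delta_1,\delta_2>0$ with $(1+\delta_1)(1+\delta_2)=2\gamma$. Since we study explosion and the modification at $r<\varepsilon$ is irrelevant there, on the region of interest $r\geq\varepsilon$ the volatility is $\sigma r^\gamma$. The candidate Lyapunov function will be time-independent and of product/homogeneous type, something like
\begin{equation}
V(r,y)=\bigl(\sigma^{-2}(1+r)^{1+\delta_1}+y\bigr)^{-\delta_2/(1+\delta_1)}\quad\text{(up to normalization)},
\end{equation}
chosen so that $V$ is bounded (giving (A.1) with $K_1$ the value at the "inner" boundary) and decreasing to $0$ as $r,y\to\infty$, and so that the two exponents conspire via $(1+\delta_1)(1+\delta_2)=2\gamma$ to make the quadratic volatility term $\tfrac12\sigma^2 r^{2\gamma}\partial_r^2 V$ balance against $V$ itself. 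The region $\mathcal{D}$ will be a bounded neighborhood of the origin in $(r,y)$-space (with regular boundary, e.g. a smoothed box $\{r<R_0\}\cap\{y<Y_0\}$ intersected with the physical domain), $\mathcal{D}^c$ its complement, and $\Gamma\subset\mathcal{D}^c$ a far-away region where $V$ is provably smaller than on $\partial\mathcal{D}$, which gives (A.2). The computation of $\mathcal{L}_\varepsilon V$ using \eqref{Leps} produces, after collecting terms and using $y\geq 0$ (valid since $y_t=\sigma^2\int_0^t r_s^{2\gamma}e^{-2\beta(t-s)}ds\geq 0$, cf. \eqref{ypositive}) and $r\geq\varepsilon$, a lower bound of the form $\mathcal{L}_\varepsilon V\geq \mathrm{const}\cdot V\cdot\bigl[\text{expression in }r\bigr]$; demanding this expression be positive for some $r=R\geq\varepsilon$ is exactly condition (i) via the function $F(R;\beta,\sigma)$ (the two bracketed terms in $F$ come from the $\partial_r V$ drift piece and the $\partial_y V$ piece respectively, and the prefactor $2\beta+\tfrac12\sigma^2\delta_2(\delta_2+1)$ is the coefficient that appears when $\partial_r^2 V$ and $-2\beta y\,\partial_y V$ are bounded). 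Condition (ii) with $G(R)$ is the analogous but cruder sufficient condition obtained by discarding the favorable $R^{2\gamma}$ term and bounding things more directly; it is a convenient checkable special case. Once (A.1)--(A.3) hold and the initial point can be taken in $\Gamma$ (or, since $\Gamma$ is reached with positive probability from $(\lambda_0,0)$, started there and then applying the Markov property), Proposition~\ref{Thm1} yields $P(\tau<\infty)>0$.

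The main obstacle, and where the real work lies, is the explicit computation and estimation of $\mathcal{L}_\varepsilon V$: one must differentiate the product-power form twice, keep track of the cross term, and find the precise algebraic rearrangement that (using only $r\geq\varepsilon$, $y\geq 0$, and the constraint on $\delta_1,\delta_2$) collapses the messy expression into $C\,V$ times the clean function $F$. Getting the exponents to line up so that $\partial_r^2 V$ scales like $r^{-2\gamma}V$ — which is why $(1+\delta_1)(1+\delta_2)=2\gamma$ is forced — is the key structural insight; after that, verifying (A.1) (boundedness of $V$, using $V\to 0$ at infinity and $V$ continuous on the closed region), choosing $R_0,Y_0$ defining $\mathcal{D}$ and the location of $\Gamma$ so that (A.2) holds with $K_2<K_3$, and checking the regularity of $\partial\mathcal{D}$ are comparatively routine. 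Finally, reducing condition (ii) to (i) (or deriving (ii) directly as a weaker sufficient condition) is a short calculus exercise: one shows $\sup_R F(R)>0$ is implied by $\sup_R\bigl(G(R)-(2\beta+\tfrac12\sigma^2\delta_2(\delta_2+1))\bigr)\geq 0$ after dropping nonnegative contributions, so (ii)$\Rightarrow$(i)$\Rightarrow$ explosion.
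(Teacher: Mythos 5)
Your part (a) matches the paper's argument and is fine. For part (b), however, there are genuine problems. First, your candidate Lyapunov function $V(r,y)=\bigl(\sigma^{-2}(1+r)^{1+\delta_1}+y\bigr)^{-\delta_2/(1+\delta_1)}$ decreases to $0$ as $r,y\to\infty$, and you explicitly place $\Gamma$ far away ``where $V$ is provably smaller than on $\partial\mathcal{D}$.'' That is the wrong direction: condition (A.2) of Proposition~\ref{Thm1} demands $K_2=\sup_{\partial\mathcal{D}}V<\inf_{\Gamma}V=K_3$, so $V$ must be \emph{larger} on the far region $\Gamma$ than on $\partial\mathcal{D}$. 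The paper's choice $V(r,y)=C_1-C_2(1+y)^{-\delta_1}-C_3(1+r)^{-\delta_2}$ is increasing in both variables and tends to the finite supremum $C_1$, which is exactly what makes (A.1) and (A.2) simultaneously satisfiable. Your function as written cannot satisfy (A.2) with a far-away $\Gamma$; to repair it you would have to subtract it from a constant, and then the inequality you need, $\mathcal{L}_\varepsilon V\geq CV$, changes sign on the subtracted part, so the computation you sketch does not carry over as stated.

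Second, the heart of the theorem --- that (A.3) holds precisely under condition (i) or (ii) --- is asserted rather than derived. In the paper this requires splitting $\mathcal{D}^c$ into three sub-regions ($y\geq R$, $r<R$; $r\geq R$, $y<R$; both $\geq R$), a minimization lemma for $a x^{\delta_1+1}+b/x$ with $x=r^{\delta_1+1}/y$ (this is where the constraint $(1+\delta_1)(1+\delta_2)=2\gamma$ is actually used), and then a separate lemma determining for which parameter values the resulting system of inequalities admits positive constants $C_2,C_3$. Conditions (i) and (ii) emerge as the non-emptiness conditions of two \emph{different} wedge regions in the $(C_2,C_3)$-plane; in particular your closing claim that (ii)$\Rightarrow$(i) is false. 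Condition (ii) controls only the second summand inside the bracket of $F$ (it says $(2\beta+\tfrac12\sigma^2\delta_2(\delta_2+1))\tfrac{1}{\delta_2}R^{2\gamma-1}(1+R)^{\delta_2+1}\leq R^{2\gamma}$) and says nothing about the strictly positive first term $\tfrac{1}{\delta_1\sigma^2}(1+R)^{\delta_1+1}$, so neither condition implies the other. Without these steps the specific forms of $F$ and $G$ in the statement cannot be recovered, so the proposal does not yet constitute a proof of part (b).
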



\begin{remark}\label{remarkCondition}
Under the assumptions in Theorem \ref{mainThm}, 
$V$, $K_{1}$, $K_{2}$, $K_{3}$ and $C$  from the conditions in 
Proposition \ref{Thm1} are given by
\begin{align}
&V(r,y) =C_{1}-\frac{C_{2}}{(1+y)^{\delta_1}}-\frac{C_{3}}{(1+r)^{\delta_2}},
\\
&K_{1}=C_{1},
\\
&K_{2}=\sup_{(r,y)\in\partial\mathcal{D}}
V(r,y)=C_{1}-\frac{C_{2}}{(1+R)^{\delta_1}}-\frac{C_{3}}{(1+R)^{\delta_2}},
\\
&K_{3}=\inf_{(r,y)\in\Gamma}
V(r,y)=C_{1}-\frac{C_{2}}{(1+2R)^{\delta_1}}-\frac{C_{3}}{(1+2R)^{\delta_2}},
\\
&C=2\beta+\frac12 \sigma^{2} \delta_2(\delta_2+1), \\
&C_1 = C_2 + C_3\,,
\end{align}
where $R\geq\varepsilon$, $\varepsilon>0$ is sufficiently small, 
$\delta_{1},\delta_{2} >0$ are positive constants
satisfying $(1+\delta_1)(1+\delta_2) = 2\gamma$,
and $C_{2},C_{3} >0$ are determined
separately for each case as follows 
(this is a restatement of the inequalities for $(a,b)$ following from
Lemma \ref{lemma:2}).

(i) For case (i) of Theorem~\ref{mainThm}, $C_{2},C_{3}>0$ satisfy the inequalities
\begin{equation}
R^{\delta_2(\delta_1+2)} \leq
\frac{\delta_2 C_3}{\delta_1 C_2 \sigma^2} \left( \frac{R}{1+R}\right)^{\delta_2-\delta_1}
\leq \frac{R^{2\gamma-\delta_1-1}-\kappa_1}{\kappa_2}\,,
\end{equation}
where $\kappa_1,\kappa_2$ are defined in (\ref{k1def}), (\ref{k2def}),
and $R$ is in the range allowed by condition (i) of Theorem~\ref{mainThm}.

(ii) For case (ii) of Theorem~\ref{mainThm}, $C_{2},C_{3}>0$ satisfy the inequality
\begin{eqnarray}
\frac{\delta_2 C_3}{\delta_1 C_2 \sigma^2} \left( \frac{R}{1+R}\right)^{\delta_2-\delta_1}
\leq \min\left\{ R^{\delta_2(\delta_1+2)}, 
\frac{\kappa_1}{R^{-\delta_2}-\kappa_2} \right\}\,,
\end{eqnarray}
where $\kappa_1,\kappa_2$ are defined in (\ref{k1def}), (\ref{k2def}),
and $R = \frac{1}{\delta_2}$.
\end{remark}

\begin{remark}\label{rem:C}
The constraints of Theorem~\ref{mainThm} can be made stronger by replacing
\begin{equation}
2\beta + \frac12\sigma^2 \delta_2(\delta_2+1) \to 
\max\{2\delta_1,\delta_2\}\beta + \frac12\sigma^2 \delta_2(\delta_2+1)\,.
\end{equation}
See the discussion around Eq.~(\ref{Csol}) about the choice of the constant $C$. 
This gives  a wider region for $\beta$. 
\end{remark}

\begin{figure}
\centering
  \includegraphics[width=4.0in]{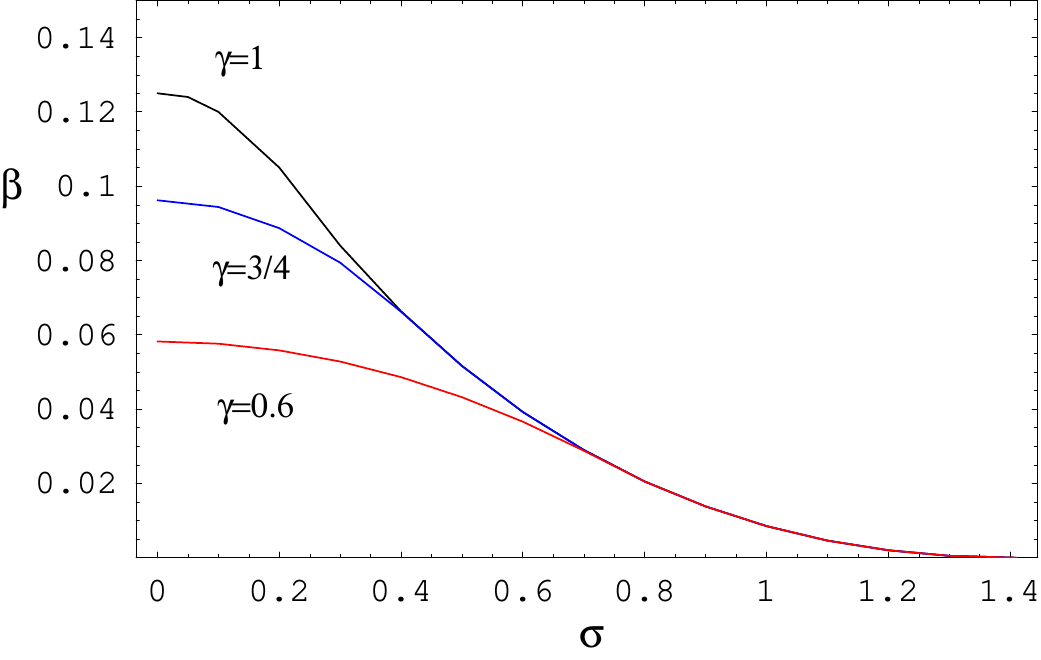}
\caption{Region in the $(\sigma,\beta)$ plane allowed by the condition (ii)
in Theorem~\ref{mainThm}.
For given $\gamma \in (\frac12,1]$, this condition is
satisfied for $\beta$ below the curves shown. }
\label{Fig:condG}
\end{figure}

\subsection{Numerical study}

We study here the regions for $(\sigma,\beta)$ allowed by Theorem~\ref{mainThm}.
We discuss only the condition (ii) which is more amenable to an analytical 
treatment. The resulting region for $(\beta,\sigma)$ includes all the typical 
values of these
parameters which are relevant for applications $0<\sigma < 1.0$ and $0<\beta< 0.1$,
see for example \cite{CaZhu}. The constraint on $\beta$ can be weakened further,
see Remark~\ref{rem:C}.

The condition (ii) of Theorem~\ref{mainThm} is satisfied in the region
below the curves shown in Figure \ref{Fig:condG}. 
For each $\frac12 < \gamma \leq 1$
there is one curve, corresponding to $\delta_2$ taking values in 
$0 < \delta_2 < 2\gamma-1$.

We outline the main steps in the derivation of these regions. The function
$G(R)= \frac{\delta_2 R}{(1+R)^{\delta_2+1}}$ with $\delta_2 >0$ has the 
following properties.

(i) $G(R)$ vanishes for $R\to 0$ and $R\to \infty$. The function $G(R)$
increases for $R< R_0(\delta_2)$ and decreases for
$R>R_0(\delta_2)$, with $R_0(\delta_2) =\frac{1}{\delta_2}$ .

(ii) $G(R)$ has a maximum at $R_0(\delta_2)$. At this point
the value of the function is
\begin{equation}
G(R_0) = \left( \frac{\delta_2}{1+\delta_2}\right)^{\delta_2+1}\,. 
\end{equation}

Fix the values of $\gamma$ and $\sigma$.
By scanning over $\delta_2 \in [0,2\gamma-1]$, find the maximum of
the expression 
\begin{equation}\label{delta2st}
\delta_{2*} := \arg\max_{\delta_2 \in [0,2\gamma-1]} 
\left\{G(R_0(\delta_2)) - \frac12\sigma^2 \delta_2(\delta_2+1)
\right\} \,.
\end{equation}
Then the values of $\beta$ allowed by the condition
(ii) of Theorem~\ref{mainThm}  (for given $\sigma,\gamma$) are
\begin{equation}\label{betarange}
0 \leq \beta \leq \frac12 G(R_0(\delta_{2*})) - \frac14 \sigma^2 \delta_{2*}
(\delta_{2*}+1)\,,
\end{equation}
where $\delta_{2*}$ is given by (\ref{delta2st}).
This region for $\beta$ is shown in Fig.~\ref{Fig:condG} for several
values of $\gamma$. The region becomes smaller as $\gamma$ approaches $\frac12$
and disappears at this point.

The value $\delta_{2*}$ decreases with $\sigma$, at fixed $\gamma$.
(Recall that this determines also the range of allowed values for $R$, 
which includes the point $R_0 = 1/\delta_{2*}$.)
In a range of sufficiently small $\sigma$, the maximum in (\ref{delta2st}) is
realized at the maximally allowed value $\delta_{2*}= 2\gamma-1$. 
In this region the curves for maximally allowed $\beta$ with different
values of $\gamma$ are distinct, as seen in Figure~\ref{Fig:condG}.
For $\sigma$ above a certain value, which depends on $\gamma$,
the value of $\delta_{2*}$ decreases from $2\gamma-1$ to zero.
In this region the maximal
$\beta$ curves are overlapping, since $\delta_{2*}$ is independent of $\gamma$.

There is a maximum value of $\sigma$ for which positive values of $\beta$ are
allowed. At this maximum value, which depends on $\beta$, 
$\delta_{2*}$ reaches zero. 
For $\beta=0$, this maximum value is $\sigma_{\rm max}=\sqrt2$. This follows
from the small-$\delta_2$ expansion 
\begin{equation}
G(R_0(\delta_2)) = \delta_2 + \delta_2^2(\log\delta_2+1) + O(\delta_2^3)\,.
\end{equation}
Substituting into (\ref{betarange}) gives
\begin{eqnarray}
\beta \leq \frac12 \delta_{2*}\left(1-\frac12\sigma^2\right) + O(\delta_{2*}^2)\,.
\end{eqnarray}
Requiring the cancellation of the $O(\delta_{2*})$ term gives the
maximal value $\sigma_{\rm max}=\sqrt2$.

\subsection{Almost sure explosion}

In Theorem \ref{mainThm}, we showed that
under certain conditions, the explosion occurs with positive
probability. Under some additional assumptions, 
one can further prove the almost sure explosion, that is, that explosion
occurs with probability one.

\begin{theorem}\label{asExplosionThm}
Suppose the assumptions of Theorem \ref{mainThm} are satisfied. 
Assume $\beta>0$.
For sufficiently large $r_{0}$ so that
\begin{equation}
r_{0}>\max\left\{\frac{e}{\beta}(4\beta R+\beta+\sigma^{2}),\frac{\sigma^{2}}{\beta}
e^{\frac{e^{2R}}{\sigma^{2}}(4\beta R+\beta+\sigma^{2})-2R-1}\right\},
\end{equation}
where $R$ is determined such that either of the conditions (i) or (ii) of
Theorem \ref{mainThm} holds,
we have the almost sure explosion, that is, $\mathbb{P}(\tau<\infty)=1$.
\end{theorem}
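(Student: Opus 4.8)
The plan is to verify the two conditions, (A.4) and (A.5), that Proposition~\ref{Thm2} adds to the hypotheses of Proposition~\ref{Thm1}, and then to invoke Proposition~\ref{Thm2}. Conditions (A.1)--(A.3) are already in hand from the proof of Theorem~\ref{mainThm}, with $\mathcal{D}=\{(r,y):0<r<R,\ 0\le y<R\}$ taken relatively open in the natural state space $E:=\{r>0,\ y\ge0\}$ — which the diffusion never leaves, since $r_{t}>0$ almost surely for $\gamma\in(\tfrac12,1]$ in the time-homogeneous model and $y_{t}=\sigma^{2}\int_{0}^{t}r_{s}^{2}\min(r_{s}^{2\gamma-2},\varepsilon^{2\gamma-2})e^{-2\beta(t-s)}\,ds\ge0$ — together with $\Gamma=\{r\ge2R,\ y\ge2R\}$ and the Lyapunov function $V(r,y)=C_{1}-C_{2}(1+y)^{-\delta_1}-C_{3}(1+r)^{-\delta_2}$, $C_{1}=C_{2}+C_{3}$, and the constants of Remark~\ref{remarkCondition} for a pair $(\delta_{1},\delta_{2})$ and an $R$ realizing condition (i) or (ii) of Theorem~\ref{mainThm}. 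Here $\mathcal{D}^{c}=E\setminus\mathcal{D}=\{r\ge R\}\cup\{y\ge R\}$, and the standing hypothesis forces $r_{0}>R$, so $(r_{0},0)\in\mathcal{D}^{c}$ and Proposition~\ref{Thm2} applies to the prescribed initial condition.

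Condition (A.4) is an immediate computation. Since $V$ is strictly increasing in each of $r$ and $y$ and $C_{1}=C_{2}+C_{3}$, on $\{r\ge R,\ y\ge0\}$ it attains its minimum $C_{3}(1-(1+R)^{-\delta_2})$ at $(R,0)$, while on $\{y\ge R,\ r>0\}$ it satisfies $V>C_{2}(1-(1+R)^{-\delta_1})$ with this value as its infimum. Hence $\inf_{\mathcal{D}^{c}}V=K_{0}:=\min\{C_{2}(1-(1+R)^{-\delta_1}),\,C_{3}(1-(1+R)^{-\delta_2})\}>0$, because $C_{2},C_{3}>0$ and $R>0$.

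Condition (A.5) — that the diffusion started at any $x\in\partial\mathcal{D}$ reaches $\Gamma$ almost surely (if it explodes first then $\tau<\infty$ and there is nothing to prove) — is the substantive step, and I would establish it by constructing a Lyapunov function for the hitting time of $\Gamma$: a function $U\ge0$ on $E\setminus\Gamma=\{r<2R\}\cup\{y<2R\}$ with $\mathcal{L}_{\varepsilon}U\le-1$ there, which by the standard supermartingale and optional-stopping estimate yields $\mathbb{E}^{t,x}[\tau_{\Gamma}]\le U(x)<\infty$, hence $\mathbb{P}^{t,x}(\tau_{\Gamma}<\infty)=1$. Such a $U$ should exist because of the positive feedback between the coordinates: on $\{r<2R\}$ the drift $y-\beta r+\beta r_{0}\ge\beta(r_{0}-2R)$ is large and positive once $r_{0}$ is large, so $r$ is strongly pushed toward $\{r\ge2R\}$; and for $r\ge2R\ge\varepsilon$ the $y$-drift $\sigma^{2}r^{2}\min(r^{2\gamma-2},\varepsilon^{2\gamma-2})-2\beta y$ becomes positive once $r$ has been dragged up, so $y$ is pushed toward $\{y\ge2R\}$. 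A natural starting ansatz is $U(r,y)=a\,(e^{2R}-e^{r})^{+}+b\,\eta(y)$ with $\eta\ge0$ bounded and non-increasing, supplemented near $r=2R$ by a piece that decreases as $r$ grows; the first threshold on $r_{0}$ is what makes $\beta(r_{0}-2R)$ dominate the bounded error terms and gives $\mathcal{L}_{\varepsilon}U\le-1$ on $\{r<2R\}$, while the factor $e^{2R}$ — hence an $e^{2R}$-scale bound on $\mathbb{E}[\tau_{\Gamma}]$ and the second, exponential lower bound on $r_{0}$ — is forced by requiring $U\ge0$ together with $\mathcal{L}_{\varepsilon}U\le-1$ uniformly on the unbounded tails of $E\setminus\Gamma$. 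The assumption $\beta>0$ is essential throughout, both through the mean-reverting drift $\beta(r_{0}-r)$ and through the $1/\beta$ prefactors in both thresholds.

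With (A.1)--(A.5) in hand, Proposition~\ref{Thm2} yields $\mathbb{P}(\tau<\infty)=1$, which is the assertion. The main obstacle is (A.5), and within it the hardest region is the ``corner'' $r\approx2R$, $y<2R$, where the $y$-drift $\sigma^{2}(2R)^{2\gamma}-2\beta y$ need not be positive unless $r$ has already been pulled up toward $r_{0}$, so neither the $r$-piece nor the $y$-piece of a naive $U$ alone forces $\mathcal{L}_{\varepsilon}U\le-1$ there; it is precisely this mechanism that forces $r_{0}$ to be not merely large but exponentially large in $R$, and pinning down the correct threshold — the one displayed in Theorem~\ref{asExplosionThm} — is where the real work lies.
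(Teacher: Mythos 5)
Your overall strategy matches the paper's: verify (A.4) directly from the explicit form of $V$, and reduce (A.5) to the Khasminskii-type criterion that some non-negative $C^2$ function $U$ on $\Gamma^{c}$ satisfies $\mathcal{L}_{\varepsilon}U\le-\alpha<0$ there, so that $\mathbb{E}^{t,x}[\tau_{\Gamma}]<\infty$. Your treatment of (A.4) is correct and agrees with the paper, as does the observation that the stated threshold forces $r_{0}>R$. However, the verification of (A.5) is a genuine gap: you stop at an ansatz $U(r,y)=a\,(e^{2R}-e^{r})^{+}+b\,\eta(y)$, acknowledge that it must be ``supplemented near $r=2R$'' (indeed $(e^{2R}-e^{r})^{+}$ is not $C^{2}$ there), and explicitly defer ``pinning down the correct threshold'' as ``where the real work lies.'' But that threshold \emph{is} the content of the theorem; without carrying out the computation you have not proved the stated bound on $r_{0}$, only asserted that some bound of this shape should exist. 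The paper closes this gap with the much simpler choice $V_{0}(r,y)=e^{-r}+e^{-y}$, which is globally smooth, decreasing in both variables, and for which $\mathcal{L}_{\varepsilon}V_{0}$ can be bounded above explicitly on each piece of $\Gamma^{c}$; the two terms of the displayed maximum then fall out of an elementary optimization.

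Your diagnosis of where the difficulty sits is also slightly off. In the paper's computation the binding region is not the corner $r\approx 2R$, $y<2R$, but the regime $r\ge 2R$, $y<2R$ with $r$ of order $\log(\beta r_{0}/\sigma^{2})+2R$: there the two available negative contributions, $-\sigma^{2}r^{2\gamma}e^{-y}\le-\sigma^{2}r^{2\gamma}e^{-2R}$ (from the $y$-drift acting on $e^{-y}$) and $-\beta r_{0}e^{-r}$ (from the mean-reversion acting on $e^{-r}$), are individually weakest and their balanced minimum equals $\sigma^{2}e^{-2R}[\log(\beta r_{0}/\sigma^{2})+2R+1]$. Requiring this to exceed the bounded error terms $4\beta R+\beta+\sigma^{2}$ is precisely what produces the exponential-in-$e^{2R}/\sigma^{2}$ lower bound on $r_{0}$. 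A correct completion of your argument would need either to reproduce this balance with your $U$ (whose $r$-part vanishes identically on $\{r\ge2R\}$, leaving only the $\eta(y)$ piece to do the work there and making the uniform bound $\mathcal{L}_{\varepsilon}U\le-1$ delicate for large $r$), or to switch to a globally decreasing smooth $U$ as the paper does.
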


\begin{remark}
Under the assumptions in Theorem \ref{asExplosionThm}, 
$K_{0}$ from the conditions in Proposition \ref{Thm2} is given by
\begin{equation}
K_{0}=\min\left\{C_{1}-\frac{C_{2}}{(1+R)^{\delta_{1}}}-C_{3},
C_{1}-C_{2}-\frac{C_{3}}{(1+R)^{\delta_{2}}}\right\},
\end{equation}
where $C_{1},C_{2},C_{3},R,\delta_{1},\delta_{2}$ are defined in Remark \ref{remarkCondition}.
\end{remark}

\section{Implications for zero coupon bond prices and Eurodollar futures}\label{sec:imply}

The explosion of $(r_{t},y_{t})$
is equivalent to the explosion of $r_{t}$ due to the explicit form of $y_{t}$ in terms of $(r_{s})_{0\leq s\leq t}$.
The explosion of $r_{t}$ thus implies that the prices of zero
coupon bonds $P(t,T)$ become zero almost surely for all $t>\tau$, with $\tau$ 
the explosion time of $r_t$.

This  follows from Eq.~(\ref{ZCB}) for the zero coupon bond price, which gives
\begin{equation}\label{ZCB2}
P(T,T+\delta) = \frac{P(0,T+\delta)}{P(0,T)}
\exp\left(- G(T,T+\delta) x_T - \frac12 G^2(T,T+\delta) y_T \right)\,,
\end{equation}
where we recall that $x_{T}=r_{T}-\lambda(T)$.
Suppose the assumptions in Theorem~\ref{mainThm}
are satisfied, then $\mathbb{P}(\tau<\infty)>0$,
which implies that for sufficiently large $T$, $\mathbb{P}(\tau<T)>0$.
As a result, with positive probability, and sufficiently large $T$,
the zero coupon bond price $P(T,T+\delta)$ collapses to zero.

This implies that interest rates $L(T_1,T_2)$ explode for all $T_1>\tau$.
Recall that the rate $L(T_1,T_2)$ is related to $P(T_1,T_2)$ as
$L(T_1,T_2) = \frac{1}{T_2-T_1}(P^{-1}(T_1,T_2) - 1)$, see e.g. \cite{APbook}.

The prices of any derivatives depending on $L(T_1,T_2)$ such as interest rate
caps, swaptions, CMS swaps, and Eurodollar futures also become infinite. 
We will show this explicitly for the prices of Eurodollar futures contracts.
Using Eq.~(\ref{ZCB2}) for the zero coupon bond price $P(T,T+\delta)$ we get
\footnote{Note that in Eq.~(43) in \cite{ORL}, there is
a typo: the factor on the
right-hand side of this equation
 should be $\frac{P(0,T)}{P(0,T+\delta)}$.}
\begin{equation}
\mathbb{E}^{\mathbb{Q}}[P^{-1}(T,T+\delta)] = \frac{P(0,T)}{P(0,T+\delta)}
\mathbb{E}^{\mathbb{Q}}\left[\exp\left(G(T,T+\delta) x_T + \frac12 G^2(T,T+\delta) y_T \right)\right]\,.
\end{equation} 
Suppose the assumptions in Theorem~\ref{mainThm}
are satisfied, then $x_T=y_T=\infty$ with positive probability
for sufficiently large $T$.
It follows that the Eurodollar futures price explodes to infinity,
that is, $\mathbb{E}^{\mathbb{Q}}[P^{-1}(T,T+\delta)]=\infty$,
for sufficiently large $T$.

In practical applications the explosions of the short rate
$r_t$ could be avoided by capping the short rate volatility to a finite
value $c$, possibly
using a prescription of the same type as that proposed in \cite{HJM},
$\sigma_r(r_t) \to \min\{\max\{0,\sigma_r(r_t)\},c\}$. 
With this change, the diffusion coefficients
satisfy the sub-linear growth condition of Theorem 5.2.9 in 
\cite{KaratzasShreve}, which ensures that the solution $r_t$ exists and is
non-explosive.


\section{Appendix: Proofs}\label{sec:proofs}

\begin{proof}[Proof of Theorem~\ref{mainThm}]

(a) 
For $0<\gamma \leq \frac12$ the coefficients of the
2-d diffusion \eqref{rSDE2}, \eqref{ySDE2} satisfy the conditions of Theorem
5.2.9 in \cite{KaratzasShreve}, which we recall here briefly for convenience. 

Consider the SDE for the $d-$dimensional vector $X_t \in\mathbb{R}^d$
\begin{equation}\label{SDE1}
X_t = \sigma(t,x) dW_t + b(t,x) dt 
\end{equation}
where $x\in \mathbb{R}^d$ and 
$W_t$ is a $d$-dimensional Brownian motion. Assume that the coefficients
satisfy the global Lipschitz and linear growth conditions 
\begin{eqnarray}\label{Thm529}
&& |\!| b(t,x) - b(t,y) |\!| + 
|\!| \sigma(t,x) - \sigma(t,y) |\!|  \leq K |\!| x-y |\!|\,, \\
&& |\!| b(t,x) |\!|^2 + 
|\!| \sigma(t,x) |\!|^2 \leq K^2 (1 + |\!|x |\!|^2)\,,\nonumber
\end{eqnarray}
for every $0 \leq t< \infty, x\in \mathbb{R}^d, y\in \mathbb{R}^d$ and
$K$ is a positive constant. 
Under these conditions, there exists a continuous, adapted process
$X = \{X_t; 0 \leq t < \infty\}$ which is a strong solution of the SDE
(\ref{SDE1}) with initial condition $X_0$, and is furthermore square-integrable.

The SDE \eqref{rSDE2}, \eqref{ySDE2} with $0<\gamma \leq \frac12$ satisfies the 
conditions (\ref{Thm529}), and thus $\{r_t, y_t\}_{t\geq 0}$ does not explode.
We study next the case $\frac{1}{2}<\gamma\leq 1$, where the linear growth
condition does not hold.

(b) 
The boundary $r_{t}=0$, $y_{t}=0$ is unattainable. Indeed, for $r_t<\varepsilon$,
we have $dr_t = \sigma r_t dW_t + (y_t - \beta r_t + \beta r_0)dt $ with
\begin{equation}\label{43}
y_{t}=\sigma^{2}\int_{0}^{t}r_{s}^{2} \min(r_s^{2(\gamma-1)},
\varepsilon^{2(\gamma-1)})
e^{2\beta(s-t)}ds>0,
\end{equation}
and since the term $\beta r_{0}>0$ and $y_{t}>0$ in the drift term of $r_{t}$, 
by comparing $r_{t}$ with a geometric Brownian motion, we have $r_{t}>0$. 
Therefore, $(y_{t},r_{t})\in\mathbb{R}^{+}\times\mathbb{R}^{+}$. 
This generalizes to $\gamma \in (\frac12,1]$ the result of 
\eqref{rpositive} and \eqref{ypositive}. Although this is proved here for the time-homogeneous
case $\lambda(t)=\lambda_0$, the result is easily seen to hold also under
the weaker assumption $\lambda'(t) + \beta \lambda(t)\geq 0$
by a comparison argument.

Let us take $\mathcal{D}:=(0,R)\times(0,R)$, with $R\geq\varepsilon$,
where we define $\mathcal{D}^{c}=\mathbb{R}^{+}\times\mathbb{R}^{+}\backslash\mathcal{D}$.
It is clear that $\mathcal{D}$ is a bounded open set.
The boundary $\partial\mathcal{D}$
is regular since $\beta r_{0}>0$ in the drift term of $r_{t}$. 
It is also easy to see that, 
for $\frac{1}{2}<\gamma\leq 1$, on any compact subset of 
$\mathbb{R}^{+}\times\mathbb{R}^{+}$, 
the coefficients of the SDE \eqref{rSDE2},\eqref{ySDE2}, 
are continuous and Lipschitz.

Assume the following form for the Lyapunov function 
\begin{equation}\label{Vdef}
V(r,y)= C_1 - \frac{C_2}{(1+y)^{\delta_1}} - \frac{C_3}{(1+r)^{\delta_2}}\,,
\end{equation}
with $C_1,C_2,C_3>0$ and $\delta_1,\delta_2 >0$ are positive constants 
satisfying the condition
\begin{equation}\label{d12cond}
(1+\delta_1)(1+\delta_2)=2\gamma \,.
\end{equation}

We would like to test the conditions (A.1),(A.2) and (A.3) of 
Proposition~\ref{Thm1}.

(1) \textit{Condition (A.1). } 
For any $(r,y) \in \mathcal{D}^c$, we have
\begin{equation}
V(r,y) 
\geq\min\left\{C_{1}-\frac{C_{2}}{(1+R)^{\delta_{1}}}-C_{3},
C_{1}-C_{2}-\frac{C_{3}}{(1+R)^{\delta_{2}}}\right\}>0\,,
\end{equation}
provided that we take
\begin{equation}\label{HoldI}
C_1 \geq C_2+C_3\,.
\end{equation}
Thus $V(r,y)$ defined on $\mathcal{D}^c$ is a positive function.
Since $r,y>0$, it is clear that $V(r,y)\leq C_1$. Thus the condition
(A.1) is satisfied.

(2) \textit{Condition (A.2). } 
Note that $\partial\mathcal{D}=
\{(R,y):0\leq y\leq R\}\cup\{(r,R):0\leq r\leq R\}$.
Thus, we have
\begin{equation}
K_{2}=\sup_{(r,y)\in\partial\mathcal{D}}V(r,y)=
C_{1}-\frac{C_{2}}{(1+R)^{\delta_1}}-\frac{C_{3}}{(1+R)^{\delta_2}}.
\end{equation}

On the other hand, taking 
$\Gamma=[2R,\infty)\times[2R,\infty)\subset\mathcal{D}^{c}$, 
we obtain
\begin{equation}
K_{3}=\inf_{(r,y)\in\Gamma}V(r,y)=
C_{1}-\frac{C_{2}}{(1+2R)^{\delta_1}}-\frac{C_{3}}{(1+2R)^{\delta_2}}
>K_{2}.
\end{equation}
Hence, the condition (A.2) holds.

(3) \textit{Condition (A.3).} 
Finally, let us check the condition (A.3). Note that
\begin{eqnarray}\label{Leps2}
&& \mathcal{L}_\varepsilon V(r,y) =
\delta_1 C_{2}\sigma^{2}
\frac{\min(r^{2\gamma}, r^2 \varepsilon^{2\gamma-2})}{(1+y)^{\delta_1+1}}
-2\delta_1 C_{2}\beta \frac{y}{(1+y)^{\delta_1+1}}
+\delta_2 C_{3}\frac{y}{(1+r)^{\delta_2+1}} \\
&& \qquad -\delta_2 C_{3}\frac{\beta r}{(1+r)^{\delta_2+1}}
+\delta_2 C_{3}\beta r_{0}\frac{1}{(1+r)^{\delta_2+1}} 
-\frac12 \delta_2(\delta_2+1) C_{3}\sigma^{2}
\frac{\min(r^{2\gamma}, r^2 \varepsilon^{2\gamma-2})}{(1+r)^{\delta_2+2}}.
\nonumber
\end{eqnarray}
Therefore, 
\begin{align*}
&\mathcal{L}_\varepsilon V-CV
\\
&=\delta_1 C_{2}\sigma^{2}
\frac{\min(r^{2\gamma}, r^2 \varepsilon^{2\gamma-2})}{(1+y)^{\delta_1+1}}
+\delta_2 C_{3}\frac{y}{(1+r)^{\delta_2+1}}
\\
&\qquad
+\frac{C_{2}}{(1+y)^{\delta_1}}\left[C-\frac{2\beta\delta_1 y}{1+y}\right]
+\frac{C_{3}}{(1+r)^{\delta_2}}\left[C-\frac{\delta_2\beta r}{1+r}
-\frac12 \delta_2(\delta_2+1) \sigma^{2}
\frac{\min(r^{2\gamma}, r^2 \varepsilon^{2\gamma-2})}{(1+r)^{2}}\right]
\\
&\qquad\qquad\qquad
+\delta_2 C_{3}\frac{\beta r_0}{(1+r)^{\delta_2+1}}-CC_{1}
\\
&\geq
\delta_1 C_{2}\sigma^{2}
 \frac{\min(r^{2\gamma}, r^2 \varepsilon^{2\gamma-2})}{(1+y)^{\delta_1+1}}
+\delta_2 C_{3}\frac{y}{(1+r)^{\delta_2+1}}
\\
&\qquad
+\frac{C_{2}}{(1+y)^{\delta_1}}\left[C-2\beta \delta_1 \right]
+\frac{C_{3}}{(1+r)^{\delta_2}}
\left[C-\delta_2 \beta-\frac12 \sigma^{2} \delta_2(\delta_2+1) 
 \frac{\min(r^{2\gamma}, r^2 \varepsilon^{2\gamma-2})}{(1+r)^2}\right]
\\
&\qquad\qquad\qquad
+\delta_2 C_{3}\frac{\beta r_0}{(1+r)^{\delta_2+1}}-CC_{1}.
\end{align*}

Furthermore, for $1<2\gamma \leq 2$, we have
\begin{eqnarray}
&& 0 < \frac{r^{2\gamma}}{(1+r)^2} \leq 1\,, \qquad r \geq 0\,, \\
&& 0 < \frac{r^{2} \varepsilon^{2\gamma-2} }{(1+r)^2} \leq 1\,, \qquad
 0 \leq r \leq \varepsilon\,,
\end{eqnarray}
($\varepsilon$ is small, say less than $1$) such that we have
\begin{align*}
\mathcal{L}_\varepsilon V-CV 
&\geq
\delta_1 C_{2}\sigma^{2}
 \frac{\min(r^{2\gamma}, r^2 \varepsilon^{2\gamma-2})}{(1+y)^{\delta_1+1}}
+\delta_2 C_{3}\frac{y}{(1+r)^{\delta_2+1}}
\\
&\qquad
+\frac{C_{2}}{(1+y)^{\delta_1}}\left[C-2\beta \delta_1 \right]
+\frac{C_{3}}{(1+r)^{\delta_2}}
\left[C-\delta_2 \beta-\frac12 \sigma^{2} \delta_2(\delta_2+1) \right]
\\
&\qquad\qquad\qquad
+\delta_2 C_{3}\frac{\beta r_0}{(1+r)^{\delta_2+1}}-CC_{1}.
\end{align*}

Let us choose $C$ to be a fixed constant so that
\begin{equation}\label{HoldII}
C\geq\max\left\{2\delta_1 \beta,\delta_2\beta+
\frac12 \sigma^{2}\delta_2(\delta_2+1)\right\}.
\end{equation}
Then we have
\begin{equation*}
\mathcal{L}_\varepsilon V-CV
\geq
\delta_1 C_{2}\sigma^{2}
 \frac{\min(r^{2\gamma}, r^2 \varepsilon^{2\gamma-2})}{(1+y)^{\delta_1+1}}
+\delta_2 C_{3}\frac{y}{(1+r)^{\delta_2+1}}
-CC_{1}.
\end{equation*}
Recall that for $(r,y)\in\mathcal{D}^{c}$, we have either $y\geq R$ or $r\geq R$,
and we chose $\varepsilon < R$.

(I) 
If $y\geq R$ and $r<R$, then we have for both $0\leq r \leq\varepsilon$
and $\varepsilon < r < R$, by positivity of the first term,
\begin{align*}
\mathcal{L}_\varepsilon V-CV
&\geq
\delta_1 C_{2}\sigma^{2}
\frac{\min(r^{2\gamma}, r^2 \varepsilon^{2\gamma-2})}{(1+y)^{\delta_1+1}}
+\delta_2 C_{3}\frac{y}{(1+r)^{\delta_2+1}}
-CC_{1}
\\
&\geq
\delta_2 C_{3}\frac{R}{(1+R)^{\delta_2+1}}-CC_{1}.
\end{align*}

(II) If $r\geq R$ and $y<R$, then we have 
(since $r\geq  R> \varepsilon$)
\begin{align*}
\mathcal{L}_\varepsilon V-CV
&\geq
\delta_1 C_{2}\sigma^{2}\frac{r^{2\gamma}}{(1+y)^{\delta_1+1}}
+\delta_2 C_{3}\frac{y}{(1+r)^{\delta_2+1}}
-CC_{1}
\\
&\geq
\delta_1 C_{2}\sigma^{2}\frac{ R^{2\gamma}}{(1+R)^{\delta_1+1}}-CC_{1}.
\end{align*}

(III) If $y\geq R$ and $r\geq R$, then we have
(again by $r\geq R> \varepsilon$)
\begin{align}
\mathcal{L}_\varepsilon V-CV
&\geq
\delta_1 C_{2}\sigma^{2}\frac{ r^{2\gamma}}{(1+y)^{\delta_1+1}}
+\delta_2 C_{3}\frac{y}{(1+r)^{\delta_2+1}}
-CC_{1}
\nonumber
\\
&\geq\delta_1 C_{2}\sigma^{2}\left( \frac{R}{1+R} \right)^{\delta_1+1}
\frac{ r^{2\gamma}}{y^{\delta_1+1}}
+\delta_2 C_{3} \left(\frac{R}{1+R}\right)^{\delta_2+1}
\frac{y}{r^{\delta_2+1}}
-CC_{1} 
\nonumber
\\
& = \delta_1 C_{2}\sigma^{2}\left( \frac{R}{1+R} \right)^{\delta_1+1}
x^{\delta_1+1}
+\delta_2 C_{3} \left(\frac{R}{1+R}\right)^{\delta_2+1}
\frac{1}{x}
-CC_{1}\,,
\label{LVCV}
\end{align}
where we denoted
\begin{equation}
x = \frac{r^{\delta_1+1}}{y}\,.
\end{equation}
The condition $2\gamma = (1+\delta_1)(1+\delta_2)$ was used to reduce the
dependence  on $(r,y)$ to a function of $x$ in the last step.

The sum of the first two terms is bounded from below by the following Lemma.

\begin{lemma}\label{minF}
The infimum of the function $\hat{F}(x): (0,\infty) \to (0,\infty)$ defined as
\begin{equation}\label{fdef}
\hat{F}(x) := a x^{\delta_1+1} + \frac{b}{x}\,,\quad
a,b > 0
\end{equation}
is given by
\begin{equation}
\inf_{x>0}\hat{F}(x)=\kappa_\delta a^{\frac{1}{\delta_1+2}} 
b^{\frac{\delta_1+1}{\delta_1+2}}\,,
\end{equation}
where
\begin{equation}
\kappa_\delta = (\delta_1+2)(\delta_1+1)^{-\frac{\delta_1+1}{\delta_1+2}}>1\,.
\end{equation}
\end{lemma}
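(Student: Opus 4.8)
The plan is to treat this as an elementary one-variable optimization. The function $\hat{F}$ is smooth on $(0,\infty)$, with $\hat{F}(x)\to+\infty$ both as $x\to 0^{+}$ (the term $b/x$) and as $x\to\infty$ (the term $ax^{\delta_1+1}$, since $\delta_1+1>0$), so the infimum is attained at an interior critical point. First I would compute $\hat{F}'(x)=a(\delta_1+1)x^{\delta_1}-bx^{-2}$ and solve $\hat{F}'(x)=0$, obtaining the unique positive root $x_{*}=\bigl(\tfrac{b}{a(\delta_1+1)}\bigr)^{1/(\delta_1+2)}$. Since $\hat{F}''(x)=a\delta_1(\delta_1+1)x^{\delta_1-1}+2bx^{-3}>0$ for all $x>0$ (using $\delta_1>0$), $\hat{F}$ is strictly convex and $x_{*}$ is the global minimizer.

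It then remains to substitute $x_{*}$ back and simplify. Using $x_{*}^{\delta_1+1}=\bigl(\tfrac{b}{a(\delta_1+1)}\bigr)^{(\delta_1+1)/(\delta_1+2)}$ and $x_{*}^{-1}=\bigl(\tfrac{a(\delta_1+1)}{b}\bigr)^{1/(\delta_1+2)}$, together with $1-\tfrac{\delta_1+1}{\delta_1+2}=\tfrac{1}{\delta_1+2}$, both $a\,x_{*}^{\delta_1+1}$ and $b\,x_{*}^{-1}$ reduce to $a^{1/(\delta_1+2)}b^{(\delta_1+1)/(\delta_1+2)}$ times a power of $\delta_1+1$; factoring out $(\delta_1+1)^{-(\delta_1+1)/(\delta_1+2)}$ and using $(\delta_1+1)^{1/(\delta_1+2)}=(\delta_1+1)\,(\delta_1+1)^{-(\delta_1+1)/(\delta_1+2)}$ leaves the bracket $1+(\delta_1+1)=\delta_1+2$, so $\hat{F}(x_{*})=(\delta_1+2)(\delta_1+1)^{-(\delta_1+1)/(\delta_1+2)}\,a^{1/(\delta_1+2)}b^{(\delta_1+1)/(\delta_1+2)}=\kappa_\delta\,a^{1/(\delta_1+2)}b^{(\delta_1+1)/(\delta_1+2)}$, as claimed.

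Equivalently, and with less bookkeeping, I could invoke the weighted arithmetic-geometric mean inequality with weights $p=\tfrac{1}{\delta_1+2}$ and $q=\tfrac{\delta_1+1}{\delta_1+2}$ (so that $p+q=1$): writing $\hat{F}(x)=p\cdot\tfrac{ax^{\delta_1+1}}{p}+q\cdot\tfrac{bx^{-1}}{q}\geq\bigl(\tfrac{ax^{\delta_1+1}}{p}\bigr)^{p}\bigl(\tfrac{bx^{-1}}{q}\bigr)^{q}$, the power of $x$ on the right-hand side is $x^{(\delta_1+1)p-q}=x^{0}=1$, so $\hat{F}(x)\geq a^{p}b^{q}/(p^{p}q^{q})$ for every $x>0$, with equality for the unique $x$ making the two summands proportional to $p$ and $q$; and since $p^{p}q^{q}=(\delta_1+2)^{-1}(\delta_1+1)^{(\delta_1+1)/(\delta_1+2)}$ one gets $1/(p^{p}q^{q})=\kappa_\delta$. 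For the inequality $\kappa_\delta>1$, note that $\kappa_\delta=1/(p^{p}q^{q})$ with $p,q\in(0,1)$, whence $\ln(p^{p}q^{q})=p\ln p+q\ln q<0$ and therefore $p^{p}q^{q}<1$. There is no genuine obstacle in this lemma; the only point requiring care is keeping the fractional exponents straight in the final simplification, which is why I would cross-check the closed form via the weighted AM-GM route.
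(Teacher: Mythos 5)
Your proof is correct, and the main optimization step is the same as the paper's: differentiate $\hat{F}$, locate the unique critical point $x_{*}=\bigl(\tfrac{b}{a(\delta_1+1)}\bigr)^{1/(\delta_1+2)}$, and substitute back (your convexity remark via $\hat{F}''>0$ is a small addition the paper omits, and your weighted AM--GM cross-check is a nice redundancy that also explains why the exponent of $x$ cancels). The one genuinely different ingredient is your proof that $\kappa_\delta>1$: the paper defines $\hat{G}(\delta_1)=\log(\delta_1+2)-\tfrac{\delta_1+1}{\delta_1+2}\log(\delta_1+1)$, shows $\hat{G}'<0$ on $[0,1]$, and evaluates $\kappa_\delta=3\cdot 2^{-2/3}>1$ at the endpoint $\delta_1=1$, so their argument is confined to $\delta_1\in[0,1]$ (which suffices, since the constraint $(1+\delta_1)(1+\delta_2)=2\gamma\leq 2$ forces $\delta_1\leq 1$). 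Your observation that $\kappa_\delta=1/(p^{p}q^{q})$ with $p+q=1$, $p,q\in(0,1)$, together with $p\ln p+q\ln q<0$, is cleaner and gives the inequality for every $\delta_1>0$ with no case analysis; it also makes transparent that $\kappa_\delta>1$ is just the strict concavity of the logarithm. Either route is fully rigorous.
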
 

\begin{proof}[Proof of Lemma \ref{minF}]
The minimum of $\hat{F}(x)$ is achieved at $\hat{F}'(x)=0$, which gives
\begin{equation}
\hat{F}'(x)=a(\delta_{1}+1)x^{\delta_{1}}-\frac{b}{x^{2}}=0\,,
\end{equation}
which implies that
\begin{equation}
\min_{x>0}\hat{F}(x)=(\delta_1+2)(\delta_1+1)^{-\frac{\delta_1+1}{\delta_1+2}}a^{\frac{1}{\delta_1+2}} 
b^{\frac{\delta_1+1}{\delta_1+2}}\,.
\end{equation}
To see that $\kappa_\delta$ is larger than $1$
for $\delta_1 \in [0,1]$, 
let us define $\hat{G}(\delta_{1}):=\log(\delta_{1}+2)-\frac{\delta_{1}+1}{\delta_{1}+2}\log(\delta_{1}+1)$.
We will show that $\hat{G}$ is decreasing in $\delta_{1}\in[0,1]$, that 
is due to
\begin{equation}
\hat{G}'(\delta_{1})= - \frac{\log(\delta_1+1)}{(\delta_1+2)^2}  <0 
\end{equation}
Hence, $\kappa_{\delta}=e^{\hat{G}(\delta_{1})}$ is decreasing in $\delta_{1}\in[0,1]$.
We can compute that at $\delta_{1}=1$, 
we have $\kappa_\delta = (1+2)(1+1)^{-\frac{1+1}{1+2}}=\frac{3}{2^{\frac{2}{3}}}>1$.
Hence, $\kappa_{\delta}$ is larger than $1$.
\end{proof}

Therefore, following \eqref{LVCV}, we have 
\begin{equation*}
\mathcal{L}_\varepsilon V-CV
\geq\kappa_{\delta}
\left(\delta_1 C_{2}\sigma^{2}\left(\frac{R}{1+R} \right)^{\delta_1+1}\right)^{\frac{1}{\delta_1+2}}
\left(\delta_2 C_{3} \left(\frac{R}{1+R} \right)^{\delta_2+1}\right)^{\frac{\delta_1+1}{\delta_1+2}}
-CC_{1}.
\end{equation*}

Hence, from (I), (II) and (III), we conclude that
$\mathcal{L}_\varepsilon V\geq CV$ for any $(r,y)\in\mathcal{D}^{c}$ if we have
\begin{align}\label{HoldIII}
CC_{1}&\leq
\min\Bigg\{\delta_2 C_{3}\frac{R}{(1+R)^{\delta_2+1}},
\delta_1 C_{2}\sigma^{2}\frac{ R^{2\gamma}}{(1+R)^{\delta_1+1}},
\\
&\qquad\qquad
\kappa_{\delta}
\left(\delta_1 C_{2}\sigma^{2}\left(\frac{R}{1+R}\right)^{\delta_1+1} 
\right)^{\frac{1}{\delta_1+2}}
\left(\delta_2 C_{3}\left(\frac{R}{1+R} \right)^{\delta_2+1}\right)^{\frac{\delta_1+1}{\delta_1+2}}
\Bigg\}.
\nonumber
\end{align}

To summarize, in order to have the Lyapunov function $V(r,y)$ to be bounded, 
positive
and satisfy (A.1), (A.2), (A.3), we need the conditions \eqref{HoldI}, 
\eqref{HoldII} and \eqref{HoldIII} to hold simultaneously. 
Taking
\begin{eqnarray}
&& C_{1}=C_{2}+C_{3}\,,  \\
\label{Csol}
&& C=\max\left\{ 2\delta_1, \delta_2 \right\} \cdot\beta+\frac12 \sigma^{2}\delta_2(\delta_2+1)\,,
\end{eqnarray}
then \eqref{HoldI} and \eqref{HoldII} are satisfied. 
This can be simplified by replacing $\max\{2\delta_1, \delta_2 \} \to 2$ since
for all $0<\gamma \leq 1$ we have $\delta_{1}, \delta_{2} \leq 1$.

The condition \eqref{HoldIII} is satisfied as well if we have
\begin{align}\label{19}
&\left(2\beta+\frac12 \sigma^{2} \delta_2 (\delta_2+1)\right)(C_{2}+C_{3})
\\
&\leq
\min\Bigg\{\delta_2 C_{3}\frac{R}{(1+R)^{\delta_2+1}},
\delta_1 C_{2}\sigma^{2}\frac{ R^{2\gamma}}{(1+R)^{\delta_1+1}},
\nonumber \\
&\qquad\qquad
\kappa_{\delta}
\left(\delta_1 C_{2}\sigma^{2}\left(\frac{R}{1+R}\right)^{\delta_1+1} 
\right)^{\frac{1}{\delta_1+2}}
\left(\delta_2 C_{3}\left(\frac{R}{1+R} \right)^{\delta_2+1}\right)^{\frac{\delta_1+1}{\delta_1+2}}
\Bigg\}.
\nonumber
\end{align}

\textit{The study of the inequality (\ref{19}).}
We study next the conditions for $(\beta,\sigma)$ for which 
the inequality (\ref{19}) is satisfied in a region of $(C_2,C_3)$,
at least for one value of $R$.
We start by writing it in an equivalent way as
\begin{equation}\label{19p}
\kappa_1 a + \kappa_2 b \leq \min 
\left\{\kappa_\delta a^{\varepsilon_1} b^{\varepsilon_2}, 
a R^{2\gamma-\delta_1-1}, b R^{-\delta_2} \right\}\,,
\end{equation}
where 
\begin{equation}
\varepsilon_1 = \frac{1}{\delta_1+2} \,,\quad
\varepsilon_2 = \frac{\delta_1+1}{\delta_1+2} 
\end{equation}
satisfying $\varepsilon_1+\varepsilon_2=1$, 
and we defined the new variables
\begin{equation}
a = \delta_1 C_2 \sigma^2 \left( \frac{R}{1+R}\right)^{\delta_1+1} \,,\quad
b = \delta_2 C_3 \left( \frac{R}{1+R}\right)^{\delta_2+1}\,,
\end{equation}
and denoted the constants
\begin{eqnarray}
&& \kappa_\delta := (\delta_1+2) (\delta_1+1)^{-\frac{\delta_1+1}{\delta_1+2}}\,, \\
\label{k1def}
&& \kappa_1 := \frac{1}{\delta_1 \sigma^2} 
\left(2\beta+\frac12\sigma^2 \delta_2(\delta_2+1) \right) 
   \left( \frac{1+R}{R} \right)^{\delta_1+1}\,, \\
\label{k2def}
&& \kappa_2 := \frac{1}{\delta_2} 
\left(2\beta+\frac12\sigma^2\delta_2(\delta_2+1)\right) 
   \left( \frac{1+R}{R} \right)^{\delta_2+1}\,.
\end{eqnarray}

We would like to obtain the region in the $(a,b) \in \mathbb{R}_+^2$ plane
where the inequality (\ref{19p}) holds, and find conditions on $\kappa_1,\kappa_2$
(or equivalently $\beta,\sigma)$) for which this region is non-empty, 
at least for one value of $R$. These regions are given by the following Lemma.

\begin{lemma}\label{lemma:2}
The inequality 
\begin{equation}
\kappa_1 a + \kappa_2 b \leq \min\left\{\kappa_\delta a^{\varepsilon_1} b^{\varepsilon_2},
a R^{2\gamma-\delta_1-1}, b R^{-\delta_2}\right\}\,,
\end{equation} 
with $\kappa_{1}, \kappa_{2}, \varepsilon_1, \varepsilon_2 > 0, 
\kappa_\delta > 1, \varepsilon_1 (\delta_1+2)=1$ and 
$\varepsilon_1+\varepsilon_2=1$,
holds in two regions of the $(a,b)$ plane:

(i) A wedge-like
region of the positive quadrant of the $(a,b)$ plane, contained between 
the two straight lines passing through origin
\begin{equation}\label{2lines}
a R^{\delta_2(\delta_1+2)} \leq b \leq a 
\frac{R^{2\gamma-\delta_1-1}-\kappa_1}{\kappa_2}\,.
\end{equation}

(ii) A wedge-like region of the positive quadrant of the $(a,b)$ plane, 
below a straight line passing through origin given by
\begin{equation}\label{1line}
b \leq a \min \left\{ R^{\delta_2(\delta_1+2)}, 
\frac{\kappa_1}{R^{-\delta_2}-\kappa_2} \right\} \,.
\end{equation}
\end{lemma}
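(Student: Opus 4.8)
The plan is to exploit that the inequality is scale invariant. Since $\varepsilon_1+\varepsilon_2=1$, each of the five quantities $\kappa_1 a$, $\kappa_2 b$, $\kappa_\delta a^{\varepsilon_1}b^{\varepsilon_2}$, $aR^{2\gamma-\delta_1-1}$, $bR^{-\delta_2}$ is positively homogeneous of degree one in $(a,b)$, so the solution set is a cone through the origin and is determined by the single ratio $t:=b/a\in(0,\infty)$. Dividing by $a$, the inequality becomes the conjunction of the three scalar conditions $\mathrm{(A)}\ \kappa_1+\kappa_2 t\le\kappa_\delta t^{\varepsilon_2}$, $\ \mathrm{(B)}\ \kappa_1+\kappa_2 t\le R^{2\gamma-\delta_1-1}$, and $\ \mathrm{(C)}\ \kappa_1+\kappa_2 t\le tR^{-\delta_2}$.

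The crucial step is that $\mathrm{(A)}$ is redundant: I claim $\mathrm{(B)}$ and $\mathrm{(C)}$ together imply $\mathrm{(A)}$. First I would record the exponent identities forced by $(1+\delta_1)(1+\delta_2)=2\gamma$ together with $\varepsilon_1=\tfrac1{\delta_1+2}$, $\varepsilon_2=\tfrac{\delta_1+1}{\delta_1+2}$, namely $2\gamma-\delta_1-1=\delta_2(\delta_1+1)$, hence $(2\gamma-\delta_1-1)\varepsilon_1=\delta_2\varepsilon_2$, and $2\gamma-\delta_1-1+\delta_2=\delta_2(\delta_1+2)$. Writing $S:=\kappa_1+\kappa_2 t\ge 0$, raising $\mathrm{(B)}$ to the power $\varepsilon_1$ and $\mathrm{(C)}$ to the power $\varepsilon_2$ and multiplying gives $S=S^{\varepsilon_1+\varepsilon_2}\le (R^{2\gamma-\delta_1-1})^{\varepsilon_1}(tR^{-\delta_2})^{\varepsilon_2}=t^{\varepsilon_2}R^{(2\gamma-\delta_1-1)\varepsilon_1-\delta_2\varepsilon_2}=t^{\varepsilon_2}\le\kappa_\delta t^{\varepsilon_2}$, the last inequality because $\kappa_\delta\ge1$ by Lemma~\ref{minF}. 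Therefore the solution set is exactly $\mathrm{(B)}\cap\mathrm{(C)}$. Now $\mathrm{(B)}$ is affine increasing in $t$ and reads $t\le t_B:=\frac{R^{2\gamma-\delta_1-1}-\kappa_1}{\kappa_2}$, empty unless $\kappa_1<R^{2\gamma-\delta_1-1}$; and $\mathrm{(C)}$ rearranges to $\kappa_1\le t\,(R^{-\delta_2}-\kappa_2)$, which has no positive solution when $\kappa_2\ge R^{-\delta_2}$ and otherwise reads $t\ge t_C:=\frac{\kappa_1}{R^{-\delta_2}-\kappa_2}$. So the solution cone is the wedge $\{a\,t_C\le b\le a\,t_B\}$, nonempty iff $t_C\le t_B$.

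To split this into the two wedges of the statement I would pivot at $t_\star:=R^{\delta_2(\delta_1+2)}$, which by the last identity is precisely the $t$ where the right-hand sides of $\mathrm{(B)}$ and $\mathrm{(C)}$ coincide, $t_\star R^{-\delta_2}=R^{2\gamma-\delta_1-1}$. For $t\ge t_\star$, $\mathrm{(C)}$ is weaker than $\mathrm{(B)}$ and only $\mathrm{(B)}$ is binding, giving $aR^{\delta_2(\delta_1+2)}\le b\le a\frac{R^{2\gamma-\delta_1-1}-\kappa_1}{\kappa_2}$ (case (i)); for $t\le t_\star$, $\mathrm{(B)}$ is weaker than $\mathrm{(C)}$ and only $\mathrm{(C)}$ is binding, giving $a\frac{\kappa_1}{R^{-\delta_2}-\kappa_2}\le b\le aR^{\delta_2(\delta_1+2)}$ (case (ii)). A one-line computation shows $t_\star\le t_B$ is equivalent to $\kappa_1+\kappa_2 R^{\delta_2(\delta_1+2)}\le R^{2\gamma-\delta_1-1}$, and also to $t_C\le t_\star$; substituting the definitions of $\kappa_1,\kappa_2$, this is exactly $F(R;\beta,\sigma)\ge0$ (cf.\ Theorem~\ref{mainThm}(i)). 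Hence whenever this holds, $t_C\le t_\star\le t_B$ and the two wedges meet along the ray $b=aR^{\delta_2(\delta_1+2)}$ and together exhaust the cone.

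The proof is largely bookkeeping; the one load-bearing point is the reduction $\mathrm{(B)}\wedge\mathrm{(C)}\Rightarrow\mathrm{(A)}$, which works only because of the exact identity $(2\gamma-\delta_1-1)\varepsilon_1=\delta_2\varepsilon_2$ and the bound $\kappa_\delta\ge1$; without it one would have to analyze the roots of the concave map $t\mapsto\kappa_\delta t^{\varepsilon_2}-\kappa_2 t-\kappa_1$ directly, and these are not available in closed form, so the clean linear (wedge) description would be lost. A minor but necessary point is the dichotomy $\kappa_2<R^{-\delta_2}$ versus $\kappa_2\ge R^{-\delta_2}$: in the latter case $\mathrm{(C)}$, hence the whole inequality, has no solution, which is why $\kappa_2<R^{-\delta_2}$ --- equivalently $G(R)>2\beta+\tfrac12\sigma^2\delta_2(\delta_2+1)$ --- appears in Theorem~\ref{mainThm}(ii).
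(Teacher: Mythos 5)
Your argument is correct and, modulo presentation, follows the same skeleton as the paper's: both pivot on the ray $b=aR^{\delta_2(\delta_1+2)}$ where the two linear bounds $aR^{2\gamma-\delta_1-1}$ and $bR^{-\delta_2}$ cross, and both use $\kappa_\delta>1$ together with the exponent identity $2\gamma-\delta_1-1=\delta_2(\delta_1+1)$ to show that the term $\kappa_\delta a^{\varepsilon_1}b^{\varepsilon_2}$ never realizes the minimum. Your way of discharging that term --- raising (B) to the power $\varepsilon_1$ and (C) to the power $\varepsilon_2$ and multiplying, so that the powers of $R$ cancel exactly --- is a clean global interpolation argument; the paper instead verifies the domination separately in each of the two regions. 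The two are equivalent in content, yours being slightly slicker and making the role of $\varepsilon_1+\varepsilon_2=1$ transparent.

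The substantive difference is your treatment of region 2, and there you are right and the paper is not. Rearranging $\kappa_1 a+\kappa_2 b\le bR^{-\delta_2}$ gives, as you say, the \emph{lower} bound $b\ge \frac{\kappa_1}{R^{-\delta_2}-\kappa_2}\,a$ (when $R^{-\delta_2}>\kappa_2$), so the solution set in region 2 is the wedge $\frac{\kappa_1}{R^{-\delta_2}-\kappa_2}\,a\le b\le aR^{\delta_2(\delta_1+2)}$. The paper's statement (ii) and its proof record this as an upper bound $b\le a\min\{\cdot\}$, which cannot be correct: near the $a$-axis the left-hand side of the inequality is at least $\kappa_1 a>0$ while the right-hand side is at most $bR^{-\delta_2}\to 0$, so the inequality fails there. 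Your version therefore corrects the lemma. A consequence you note, and which deserves emphasis, is that non-emptiness of the corrected wedge (ii), namely $\frac{\kappa_1}{R^{-\delta_2}-\kappa_2}\le R^{\delta_2(\delta_1+2)}$, is algebraically the same inequality as non-emptiness of wedge (i), i.e.\ $F(R;\beta,\sigma)\ge 0$; the condition $R^{-\delta_2}>\kappa_2$ used in Theorem~\ref{mainThm}(ii) only guarantees that constraint (C) is satisfiable for some $t$, not that the resulting wedge is non-empty, so the hypotheses built on case (ii) would need to be re-examined in light of this.
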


\begin{proof}[Proof of Lemma~\ref{lemma:2}]
We prove that the inequality (\ref{19p}) is satisfied in the regions
(\ref{2lines}) and (\ref{1line}).

The line 
\begin{equation}
b = a R^{2\gamma-\delta_1+\delta_2-1} = a R^{\delta_2(\delta_1+2)} 
\end{equation}
divides the first quadrant of the $(a,b)\in \mathbb{R}^2_+$ plane
into two regions: 
\begin{itemize}
\item[(i)] 
\textit{Region 1} with $b >  a R^{\delta_2(\delta_1+2)}$; 
\item[(ii)] 
\textit{Region 2} with $b <  a R^{\delta_2(\delta_1+2)}$. 
\end{itemize}

We show that the inequality (\ref{19p}) simplifies in each
of these regions as follows.

(i) \textit{Region 1} with $b > a R^{\delta_2(\delta_1+2)}$.

In this region we have clearly 
\begin{equation}
\min\left\{b R^{-\delta_2}, a R^{2\gamma-\delta_1-1}\right\}= a R^{2\gamma-\delta_1-1}\,.
\end{equation} 
Furthermore, we have
\begin{align}
\kappa_\delta a^{\varepsilon_1} b^{\varepsilon_2} 
&>\kappa_\delta a R^{\varepsilon_2 \delta_2 (\delta_1+2)} =
\kappa_\delta a R^{\delta_2(\delta_1+1)} 
\\
&= \kappa_\delta a R^{\delta_2(2\gamma-\delta_1-1)} \geq  
a R^{\delta_2(2\gamma-\delta_1-1)}\,, \nonumber
\end{align}
since $\kappa_\delta>1$ as noted above. 
Thus the inequality (\ref{19p}) reduces in this region
to a linear inequality
\begin{equation}
\kappa_1 a + \kappa_2 b \leq a R^{2\gamma-\delta_1-1}\,.
\end{equation}
This gives the upper bound on $b$ in (\ref{2lines}).

(ii) \textit{Region 2} with $b < a R^{\delta_2(\delta_1+2)}$. 

In this region we have clearly 
\begin{equation}
\min\left\{b R^{-\delta_2}, a R^{2\gamma-\delta_1-1}\right\}= b R^{-\delta_2}\,.
\end{equation}
Furthermore, we have the lower bound
\begin{equation}
\kappa_\delta a^{\varepsilon_1} b^{\varepsilon_2} > 
\kappa_\delta (b R^{-\delta_2(\delta_1+2)})^{\varepsilon_1} b^{\varepsilon_2} =
\kappa_\delta b R^{-\varepsilon_1\delta_2(\delta_1+2)} > b R^{-\delta_2}
\end{equation}
since $\kappa_\delta>1$. 
Thus the inequality (\ref{19p}) reduces in this region
to the linear inequality
\begin{equation}
\kappa_1 a + \kappa_2 b \leq b R^{-\delta_2}\,.
\end{equation}
This gives an upper bound on $b$
\begin{equation}
b \leq \frac{\kappa_1}{R^{-\delta_2}-\kappa_2} a
\end{equation}
which is useful only if $R^{-\delta_2} > \kappa_2$, or equivalently if
\begin{equation}
\left(2\beta+\frac12 \sigma^2 \delta_2 (\delta_2+1) \right) (1+R)^{\delta_2+1} 
< R \delta_2\,.
\end{equation}
This is obtained using the expression (\ref{k2def}) for $\kappa_{2}$.

If this condition is satisfied, then we get that (\ref{19p}) is satisfied
in the subset of region 2
\begin{equation}
b \leq \min \left\{R^{\delta_2(\delta_1+2)}, 
\frac{\kappa_1}{R^{-\delta_2}-\kappa_2} \right\}a\,.
\end{equation}

This is either the entire region 2, or a subset, bounded by the real axis
and the line $b=\frac{\kappa_1}{R^{-\delta_2}-\kappa_2}a$.
\end{proof}

Finally, let us get back to the proof of Theorem~\ref{mainThm}.
In order for the region (\ref{2lines}) to be non-empty, 
the following inequality must hold
\begin{equation}\label{ineq1}
\kappa_2 R^{\delta_2(\delta_1+2)} \leq R^{2\gamma-\delta_1-1}- \kappa_1\,.
\end{equation}
Substituting here the expressions (\ref{k1def}), (\ref{k2def}) 
for $\kappa_{1},\kappa_{2}$, this becomes
\begin{equation}
R^{2\gamma} \geq 
\left(2\beta+\frac12\sigma^2 \delta_2(\delta_2+1)\right)
\left(
\frac{1}{\delta_1 \sigma^2}
(1+R)^{\delta_1+1} +
\frac{1}{\delta_2}  
R^{\delta_1\delta_2+\delta_1+\delta_2} (1+R)^{\delta_2+1} \right)\,.
\end{equation}

In order for the region (\ref{1line}) to be non-empty one requires
$R^{-\delta_2} > \kappa_2$ which gives the inequality
\begin{equation}\label{ineq2}
R \geq \frac{1}{\delta_2} \left(2\beta+\frac12 \sigma^2\delta_2(\delta_2+1)\right) 
(1+R)^{\delta_2+1} \,.
\end{equation}

The inequality (\ref{ineq1}) yields the statement (i) of Theorem~\ref{mainThm},
and the inequality (\ref{ineq2}) the statement (ii). This completes the proof
of Theorem~\ref{mainThm}.
\end{proof}


\begin{proof}[Proof of Theorem~\ref{asExplosionThm}]

We would like to test the conditions (A.4) and (A.5) of 
Proposition~\ref{Thm2}.

(1) \textit{Condition (A.4).} 
Let $V(r,y)$ be the Lyapunov function (\ref{Vdef})
defined in our Theorem \ref{mainThm}. We have to check that its infimum on
$\mathcal{D}^{c}$ is positive. 
We can compute that
\begin{align}
K_{0}:=\inf_{(r,y)\in\mathcal{D}^{c}}V(r,y)
&=
C_1 - \sup_{(r,y)\in\mathcal{D}^{c}} \left(
\frac{C_2}{(1+y)^{\delta_1}} + \frac{C_3}{(1+r)^{\delta_2}} \right) 
\\
&=\min\left\{C_{1}-\frac{C_{2}}{(1+R)^{\delta_{1}}}-C_{3},
C_{1}-C_{2}-\frac{C_{3}}{(1+R)^{\delta_{2}}}\right\}\,.
\nonumber
\end{align}
By (\ref{HoldI}) we have $C_1 \geq C_2 +C_3$ which gives $K_0>0$.
Thus, (A.4) holds.

(2) \textit{Condition (A.5)}. 
According to Theorem 3.9. and the discussion at the beginning of Chapter 3.7. 
in \cite{Khasminskii}, 
it suffices to show that there exists a non-negative function $V_{0}(r,y)$ 
for $(r,y)\in\Gamma^{c}$ that is twice differentiable
in $(r,y)$ such that
\begin{equation}
\mathcal{L}_\varepsilon V_{0}(r,y)\leq-\alpha\,, \qquad\text{for any $(r,y) \in \Gamma^{c}$},
\end{equation}
where $\alpha>0$ is some constant. We use the notation $V_{0}$ to distinguish it
from the Lyapunov function $V$ defined in Theorem \ref{mainThm}.

Let us recall from the proof of Theorem \ref{mainThm} that 
$\Gamma=[2R,\infty)\times[2R,\infty)$.
Therefore, $\Gamma^{c}=\{(r,y): 0<y<2R\text{ or }0<r<2R\}$.
Let us define
\begin{equation}
V_{0}(r,y)=e^{-r}+e^{-y}.
\end{equation}
Then $V_{0}$ is non-negative and twice differentiable. 
We can compute that
\begin{equation}
\mathcal{L}_\varepsilon V_{0}(r,y)=
(-\sigma^{2}\min(r^{2\gamma}, r^2 \varepsilon^{2\gamma-2}) + 2\beta y) e^{-y}
+\left(\beta r-y-\beta r_{0}+\frac{1}{2}\sigma^{2}
\min(r^{2\gamma}, r^2 \varepsilon^{2\gamma-2})\right)e^{-r}.
\end{equation}

For $(r,y)\in\Gamma^{c}$, either one of the inequalities 
$0<y<2R$ or $0<r<2R$ holds. 

(a) If $0<r<2R$, 
we distinguish between $0< r \leq \varepsilon$ and 
$\varepsilon < r < 2R$. In the latter case we have
\begin{align*}
&\mathcal{L}_\varepsilon V_{0}(r,y)
\\
&\leq-\beta r_{0}e^{-2R}
+\sup_{0<y<\infty,\varepsilon <r<2R}
\left\{(-\sigma^{2}r^{2\gamma}+2\beta y)e^{-y}+
 \left(\beta r-y+\frac{1}{2}\sigma^{2}r^{2\gamma}\right)e^{-r}\right\}
\\
&\leq-\beta r_{0}e^{-2R}
+\sup_{0<y<\infty}2\beta ye^{-y}+
 \sup_{\varepsilon <r<2R}
 \left(\beta r+\frac{1}{2}\sigma^{2}r^{2\gamma}\right)e^{-r}
\\
&\leq-\beta r_{0}e^{-2R}+\frac{2\beta}{e}+(2\beta R+2\sigma^{2}R^{2\gamma})<0,
\end{align*}
for any sufficiently large $r_{0}$ such that
\begin{equation}\label{r0R}
r_{0}>\frac{e^{2R}}{\beta}\left[\frac{2\beta}{e}+(2\beta R+2\sigma^{2}R^{2\gamma})\right],
\end{equation}
where we assumed that $\beta>0$.

For $0<r\leq \varepsilon$ we get by a similar argument
\begin{align*}
&\mathcal{L}_\varepsilon V_{0}(r,y)
\\
&\leq-\beta r_{0}e^{-\varepsilon}
+\sup_{0<y<\infty,0<r \leq \varepsilon}
\left\{(-\sigma^{2}r^{2} \varepsilon^{2\gamma-2} +2\beta y)e^{-y}+
 \left(\beta r-y+\frac{1}{2}\sigma^{2}r^{2} \varepsilon^{2\gamma-2} \right)e^{-r}\right\}
\\
&\leq-\beta r_{0}e^{-\varepsilon}
+\sup_{0<y<\infty}2\beta ye^{-y}+
 \sup_{0 <r \leq \varepsilon}
 \left(\beta r+\frac{1}{2}\sigma^{2}r^{2} \varepsilon^{2\gamma-2} \right)e^{-r}
\\
&\leq-\beta r_{0}e^{-\varepsilon}
+\frac{2\beta}{e}+\left(\beta \varepsilon + \frac12 \sigma^{2} \varepsilon^{2\gamma}\right)<0,
\end{align*}
for any sufficiently large $r_{0}$ such that
\begin{equation}\label{r0epsilon}
r_{0}>\frac{e^{\varepsilon}}{\beta}
\left[\frac{2\beta}{e}+\left(\beta \varepsilon+ \frac12 \sigma^{2}\varepsilon^{2\gamma}\right)\right].
\end{equation}
Since $R\geq\varepsilon$, the condition \eqref{r0R} implies \eqref{r0epsilon}.
Thus $\mathcal{L}_{\varepsilon}V_{0}(r,y)$ 
for $0<r<2R$ as long as \eqref{r0R} holds.

(b) If $r\geq 2R$, then we must have $0<y<2R$.
Then, we have
\begin{align*}
\mathcal{L}_\varepsilon V_{0}(r,y)
&=(-\sigma^{2}r^{2\gamma}+2\beta y)e^{-y}
 +\left(\beta r-y-\beta r_{0}+\frac{1}{2}\sigma^{2}r^{2\gamma}\right)e^{-r}
\\
&\leq (-\sigma^{2}r^{2\gamma}+4\beta R)e^{-y}+
   \left(\beta r+\frac{1}{2}\sigma^{2}r^{2\gamma}\right)e^{-r}
-\beta r_{0}e^{-r}.
\end{align*}

For $\frac{1}{2}<\gamma\leq 1$ we have
\begin{equation}\label{mid}
\mathcal{L}_\varepsilon V_{0}(r,y) 
\leq -\sigma^{2}r^{2\gamma}e^{-2R} + 4\beta R +
\left(\beta r+\frac{1}{2}\sigma^{2}(r^2+r)\right)e^{-r}
-\beta r_{0}e^{-r},
\end{equation}
where we used the fact that $r^{2\gamma}\leq r^{2}+r$ for every $r\geq 0$ and $\frac{1}{2}<\gamma\leq 1$.
Moreover, for any $r\geq 0$, we have $e^{r}\geq r+\frac{1}{2}r^{2}$ so that
\begin{equation}\label{pluginto}
\left(\beta r+\frac{1}{2}\sigma^{2}(r^2+r)\right)e^{-r}
\leq\frac{(\beta+\frac{1}{2}\sigma^{2})r+\frac{1}{2}\sigma^{2}r^{2}}{r+\frac{1}{2}r^{2}}
\leq\beta+\sigma^{2}.
\end{equation}
Hence, by plugging \eqref{pluginto} into \eqref{mid}, we get
\begin{equation}
\mathcal{L}_\varepsilon V_{0}(r,y) 
\leq -\sigma^{2}r^{2\gamma}e^{-2R} + 4\beta R +
\beta+\sigma^{2}
-\beta r_{0}e^{-r}.
\end{equation}
Denote $H(r):=\sigma^{2}r^{2\gamma}e^{-2R}+\beta r_{0}e^{-r}$, $r\geq 0$.
Let us give a lower bound of $H(r)$ over $r\geq 0$.
For $0\leq r\leq 1$, we have $H(r)\geq\frac{\beta}{e}r_{0}$,
and for $r>1$, we have $H(r)\geq\sigma^{2}re^{-2R}+\beta r_{0}e^{-r}$ since $\gamma\in(\frac{1}{2},1]$.
Denote $\tilde{H}(r):=\sigma^{2}re^{-2R}+\beta r_{0}e^{-r}$, and we can compute that
\begin{equation}
\tilde{H}'(r)=\sigma^{2}e^{-2R}-\beta r_{0}e^{-r},
\end{equation}
which is negative for $r<\log(\frac{\beta r_{0}}{\sigma^{2}})+2R$
and positive for $r>\log(\frac{\beta r_{0}}{\sigma^{2}})+2R$. 
Thus
\begin{equation}
\tilde{H}(r)\geq\sigma^{2}e^{-2R}\left[\log\left(\frac{\beta r_{0}}{\sigma^{2}}\right)+2R\right]
+\beta r_{0}e^{-\log(\frac{\beta r_{0}}{\sigma^{2}})-2R}
=\sigma^{2}e^{-2R}\left[\log\left(\frac{\beta r_{0}}{\sigma^{2}}\right)+2R+1\right].
\end{equation}
Hence, 
\begin{equation}
H(r)=\sigma^{2}r^{2\gamma}e^{-2R}+\beta r_{0}e^{-r}
\geq
\min\left\{\frac{\beta}{e}r_{0},
\sigma^{2}e^{-2R}\left[\log\left(\frac{\beta r_{0}}{\sigma^{2}}\right)+2R+1\right]\right\}.
\end{equation}
Hence, we conclude that
\begin{equation}
\max_{r,y\geq 0}\mathcal{L}_{\varepsilon}V_{0}(r,y)<0,
\end{equation}
if $r_{0}$ is sufficiently large so that
\begin{equation}
\min\left\{\frac{\beta}{e}r_{0},
\sigma^{2}e^{-2R}\left[\log\left(\frac{\beta r_{0}}{\sigma^{2}}\right)+2R+1\right]\right\}
>4\beta R+\beta+\sigma^{2},
\end{equation}
which holds if
\begin{equation}
r_{0}>\max\left\{\frac{e}{\beta}(4\beta R+\beta+\sigma^{2}),\frac{\sigma^{2}}{\beta}
e^{\frac{e^{2R}}{\sigma^{2}}(4\beta R+\beta+\sigma^{2})-2R-1}\right\}.
\end{equation}

For both cases (a) and (b), for sufficiently large $r_0$ the inequality
$\max_{r,y\geq 0}\mathcal{L}_\varepsilon V_{0}(r,y)<0$ is satisfied.  
The proof is complete.
\end{proof}


\section*{Acknowledgements}

We would like to thank Camelia Pop for discussions about boundary conditions of SDEs. 
The authors are also grateful to the REU students Ruby Oates, Alex Pollack and Kelsey Paetschow
for their help with Figure~\ref{fig:beta}.
Lingjiong Zhu acknowledges the support from NSF Grant DMS-1613164.

\end{document}